\documentclass[11pt]{article}
\usepackage{amsmath,amssymb,amsthm}
\usepackage{mathtools}
\usepackage{enumerate}
\usepackage[margin=1in]{geometry}
\usepackage{hyperref}
\usepackage{cleveref}
\usepackage{bbm}
\usepackage{xcolor}
\crefname{assumption}{Assumption}{Assumptions}
\Crefname{assumption}{Assumption}{Assumptions}

\newtheorem{theorem}{Theorem}[section]
\newtheorem{lemma}[theorem]{Lemma}
\newtheorem{proposition}[theorem]{Proposition}
\newtheorem{corollary}[theorem]{Corollary}
\newtheorem{definition}[theorem]{Definition}
\newtheorem{remark}[theorem]{Remark}
\newtheorem{assumption}{Assumption}

\newcommand{\eps}{\varepsilon}
\newcommand{\calM}{\mathcal{M}}

\newcommand{\calD}{\mathcal{D}}
\newcommand{\calY}{\mathcal{Y}}
\newcommand{\calZ}{\mathcal{Z}}
\newcommand{\bD}{\mathbf{D}}
\newcommand{\bY}{\mathbf{Y}}

\newcommand{\E}{\mathbb{E}}
\newcommand{\Bern}{\mathrm{Bern}}
\newcommand{\Unif}{\mathrm{Uniform}}
\newcommand{\Hb}{H_{\mathrm{b}}}         
\newcommand{\dkl}{D_{\mathrm{KL}}}
\newcommand{\Syn}{\mathrm{Syn}}
\newcommand{\Red}{\mathrm{Red}}
\newcommand{\DeltaSyn}{\Delta_{\mathrm{syn}}}
\newcommand{\Phik}{\Phi_k}
\newcommand{\kstar}{k^*}
\newcommand{\Cmin}{C_{\min}}
\newcommand{\Pe}{P_e}

\title{Cross-Silo De-Anonymization Under Local Differential Privacy:\\
       Threat Model, Phase Transition, and Coordination Necessity}

\author{%
  Ziniu Liu\qquad Aiping Li\thanks{Corresponding author.}\\[4pt]
  National University of Defense Technology\\
  \texttt{\{liuzn\_nudt, liaiping\}@nudt.edu.cn}
}
\date{\today}

\begin{document}
\maketitle

\begin{abstract}
When a person's records appear in $k$ independent data silos, each
protected by $(\eps,\delta)$-differential privacy, standard composition
yields a valid $(k\eps,k\delta)$-DP guarantee for the joint output.
This worst-case bound, however, does not answer the concrete inference
question: \emph{at what $k$ can an adversary actually identify a target
person?}  This paper develops the information-theoretic framework needed
to answer that question.

We introduce \emph{cross-silo person-level DP} (XSP-DP), a
Pufferfish-style privacy notion whose adjacency relation captures
all records of a single person across all silos simultaneously, and
verify that the standard basic composition bound
$(\sum_i\eps_i,\sum_i\delta_i)$-DP carries over to this adjacency model.
Within this framework we prove that de-anonymization undergoes a phase
transition at
$\kstar = \Theta(\log n / \eps^2)$
(population size~$n$, per-silo RR parameter~$\eps$):
a Fano lower bound shows any estimator fails for $k \ll \kstar$,
while a matching maximum-likelihood upper bound shows the attack
succeeds for $k \gg \kstar$.
An explicit XOR + randomized-response construction demonstrates
information synergy: each silo's output is individually uninformative
about the target ($I(Z;Y_i) = 0$), yet the joint mutual information is
strictly positive.
For non-coordinated binary randomized-response mechanisms, we prove
that de-anonymization is inevitable once $k$ exceeds the threshold,
establishing that cross-silo coordination is necessary.

These results provide a baseline threat model and $\Theta$-level
threshold for cross-silo inference attacks under local DP.
Sharp constants, second-order thresholds, and spectral characterizations
of the phase transition are developed in a companion paper; coordinated
defense protocols and their system-level guarantees are treated
separately.
\end{abstract}

\section{Introduction}\label{sec:intro}

\subsection{Motivation}

Consider a network of $k$ hospitals participating in a federated learning
pipeline.  Each hospital trains on its own patient cohort and publishes a
differentially private summary---a gradient, a prompt refinement, or an
aggregated statistic.  Each publication individually satisfies
$(\eps,\delta)$-DP.  Standard composition guarantees that the joint
publication satisfies $(k\eps, k\delta)$-DP, which at small $k$ appears
acceptable.

But the composition bound alone does not answer a concrete adversarial
question: \emph{at what value of $k$ can a curious adversary, who sends
one innocuous-looking query to each hospital's API and assembles the
$k$ responses, actually identify a target person among $n$ patients?}

This paper shows that de-anonymization undergoes a sharp \emph{phase
transition}: below a critical threshold $\kstar = \Theta(\log n / \eps^2)$,
any estimator fails with non-negligible probability; above it, a
maximum-likelihood adversary succeeds with high probability.
For binary randomized-response mechanisms, no non-coordinated defense
can prevent this.

\subsection{Beyond Composition: The Inference Question}

When a person's data appear in $k$ silos and each silo independently
applies an $(\eps,\delta)$-DP mechanism, standard basic composition
\cite{dwork2006calibrating} gives a valid $(k\eps, k\delta)$-DP guarantee
for the joint output.  This bound is well-known and straightforward to apply.

The question we ask is different and complementary:
\emph{at what value of $k$ does the accumulated leakage actually allow
an adversary to identify a target person?}
The $(k\eps, k\delta)$-DP guarantee degrades linearly in $k$, but it
is a worst-case bound that does not pinpoint the threshold at which
de-anonymization transitions from impossible to feasible.
Answering this question requires an information-theoretic analysis---relating
per-silo mutual information to the adversary's identification probability---rather
than purely compositional reasoning.

Pufferfish privacy \cite{kifer2014pufferfish} can in principle model
cross-silo threats, but it requires specifying a distribution family over
secrets and world states that is difficult to instantiate concretely for
federated systems.  Existing multi-agent DP work
(e.g., \cite{dwork2006our,cheu2019distributed}) focuses on centralized
composition or heterogeneous local sensitivity, not on the inference risk
studied here.

We provide:
\begin{itemize}
  \item a concrete adjacency relation---\emph{person-level cross-silo adjacency}
        (\Cref{def:adjacency})---that captures the threat model exactly;
  \item tight information-theoretic bounds on how many silos suffice for
        de-anonymization;
  \item a formal impossibility result (for randomized response) that rules out
        local-only defenses.
\end{itemize}

\subsection{Contributions}

We make four contributions.

\begin{enumerate}
\item \textbf{XSP-DP (\Cref{def:xsp-dp}).}
  We define cross-silo person-level DP via a new adjacency relation
  $\bD \sim_{ps} \bD'$ that treats all records of a single person
  across silos as the protected unit.

\item \textbf{Composition under XSP-DP (\Cref{prop:upper-bound}).}
  If each silo $i$ satisfies $(\eps_i,\delta_i)$-DP at person-level
  granularity and mechanisms are independent, the joint mechanism satisfies
  $(\sum_i\eps_i,\;\sum_i\delta_i)$-XSP-DP, recovering the standard
  basic composition guarantee within our cross-silo adjacency model.

\item \textbf{Phase transition theorem (\Cref{thm:fano,thm:ml,cor:kstar}).}
  Under Assumptions A1--A4, the de-anonymization error probability satisfies:
  \begin{align*}
    k < (1-\delta)\kstar &\Rightarrow P_e \geq \delta - o(1), \\
    k > (1+\delta)\kstar &\Rightarrow P_e \leq n^{-\delta} \to 0,
  \end{align*}
  for any $\delta \in (0,1)$, establishing a sharp phase transition at
  $\kstar = \Theta(\log n / I_1)$ where $I_1 = I(Z;Y_i)$ is the
  single-silo mutual information.

\item \textbf{Impossibility for randomized response
  (\Cref{thm:impossibility}).}
  Any non-coordinated collection of binary $\eps$-DP randomized-response
  mechanisms with per-silo information $I_1 \geq \alpha > 0$ fails to
  prevent de-anonymization for
  $k > 2(1+\delta)\ln n / \alpha$.
  The Fano converse (\Cref{thm:fano}), which holds for \emph{any} mechanism,
  shows that the threshold scale $k = \Theta(\ln n / I_1)$ is universal.
\end{enumerate}

\subsection{Paper Organization}

\Cref{sec:related} reviews related work.
\Cref{sec:model} introduces the formal model, adjacency relation, and
assumptions.
\Cref{sec:upper} verifies that standard composition applies under XSP-DP.
\Cref{sec:synergy} presents the XOR construction and PID analysis.
\Cref{sec:threshold} proves the phase transition theorems.
\Cref{sec:impossibility} proves the impossibility result.
\Cref{sec:defense} discusses implications for coordinated defense.
\Cref{sec:experiments} presents synthetic experiments validating the
threshold formula.
\Cref{sec:conclusion} concludes.

\section{Related Work}\label{sec:related}

\paragraph{Differential privacy composition.}
The foundational composition theorems of Dwork et al.~\cite{dwork2006calibrating}
establish that the sequential composition of an $\eps_1$-DP and an $\eps_2$-DP
mechanism yields $(\eps_1+\eps_2)$-DP, and extensions to advanced composition
reduce the total cost to $O(\eps\sqrt{k\log(1/\delta)})$ for $k$ mechanisms
\cite{dwork2010boosting}.
Renyi DP \cite{mironov2017renyi} and zero-concentrated DP \cite{bun2016concentrated}
provide tighter accounting under composition.
These results readily apply to the cross-silo setting via standard sequential
composition: if a person's data appear in $k$ silos, the joint output is
$(k\eps, k\delta)$-DP.  However, this worst-case bound does not answer the
\emph{inference} question of how many silos suffice for an adversary to
actually identify a target person---a gap that motivates our work.

\paragraph{Cross-silo and federated learning privacy.}
Federated learning~\cite{mcmahan2017communication} has spurred a large literature on
privacy under model aggregation.  DP-SGD~\cite{abadi2016deep} clips and noises
local gradients; follow-on work analyzes amplification by sampling
\cite{mironov2017renyi} and shuffling~\cite{erlingsson2019amplification}.
In the cross-silo federated setting, each participating organization holds a distinct
local dataset and runs its own mechanism before communicating with a central server.
Prior work assumes that each silo's output is locally DP with respect to that silo's
records~\cite{kairouz2021advances}, without analyzing the concrete
\emph{de-anonymization risk} that arises when an adversary queries multiple
silos about the same person.
Our model formalizes this threat and provides tight
information-theoretic bounds on when identification becomes feasible.

\paragraph{Membership and attribute inference attacks.}
Shokri et al.~\cite{shokri2017membership} demonstrate empirically that ML model
outputs can reveal whether a given record was in the training set.  Subsequent
attacks have grown increasingly effective~\cite{carlini2022membership,
yeom2018privacy}.
In multi-party settings, passive inference across parties has been studied in
the context of collaborative inference~\cite{ganju2018property} and
model inversion~\cite{fredrikson2015model}, but these works do not provide
formal lower bounds on the number of queries needed for de-anonymization or
characterize the information-theoretic phase transition we prove here.

\paragraph{De-anonymization and linkage attacks.}
Narayanan and Shmatikoff~\cite{narayanan2008robust} demonstrate that
auxiliary information from one database can de-anonymize records in another,
even after sanitization.  Sweeney's $k$-anonymity~\cite{sweeney2002k}
and its successors ($\ell$-diversity, $t$-closeness) attempt to limit such
linkage, but have no formal DP guarantees.
On the theoretical side, Dinur and Nissim~\cite{dinur2003revealing} show that
$\Omega(n)$ approximately-correct answers to linear queries over a database
of $n$ individuals suffice to reconstruct the database; our result is
complementary---we characterize how many \emph{independent agents} an
adversary must query to re-identify a single target under local DP.

\paragraph{Information-theoretic privacy bounds.}
Fano's inequality has been used to lower-bound minimax estimation error in
statistical estimation~\cite{yu1997assouad} and in private learning~\cite{duchi2013local}.
Duchi, Jordan, and Wainwright~\cite{duchi2013local} establish minimax rates
for locally private estimation using Fano-style arguments, but focus on
utility (estimation error) rather than identity leakage across silos.
Our use of Fano to lower-bound de-anonymization error, combined with
a matching upper bound via the ML attack, is technically distinct: we study
identification (not estimation) in a federated multi-silo model.

\paragraph{Pufferfish and correlated privacy.}
Pufferfish privacy~\cite{kifer2014pufferfish} extends DP to arbitrary secret
classes and world-state distributions, subsuming models with correlated data.
Blowfish~\cite{he2014blowfish} instantiates Pufferfish with graph-structured
correlations.  Our XSP-DP (\Cref{def:xsp-dp}) can be viewed as a tractable
instantiation of Pufferfish where the secret class is the cross-silo person
identity and mechanisms are local.  The key distinction is that we derive
explicit, computable lower bounds and phase-transition thresholds, whereas
the general Pufferfish framework does not provide such tight constructive results.

\paragraph{Information synergy and PID.}
The Partial Information Decomposition (PID) framework of Williams and Beer
\cite{williams2010nonnegative} decomposes mutual information into unique,
redundant, and synergistic components.  Positive synergy (where the joint
observation reveals more than the sum of individual contributions) has been
studied in neuroscience~\cite{lizier2018information} and cryptography~\cite{griffith2014quantifying}.
Our XOR+RR construction (\Cref{prop:synergy}) provides a clean example
of pure synergy in a DP-constrained multi-agent setting: each silo's output is
individually uninformative, yet two outputs together reveal the target completely
in the limit.  To our knowledge, this is the first result connecting PID synergy
to cross-silo privacy leakage under DP.

\paragraph{Multi-agent and distributed DP.}
Recent work on DP for multi-agent learning includes
privacy-preserving multi-party computation~\cite{dwork2006our},
local DP in the shuffle model~\cite{cheu2019distributed}, and
federated analytics~\cite{mcmahan2022federated}.
Attempts to combine DP with multi-agent optimization
(e.g., DP-ES, DP-MAS~\cite{mcmahan2022federated}) focus on utility preservation
under per-agent local DP, not on the compositional leakage that arises when an
adversary queries multiple agents about the same individual.
The impossibility result in \Cref{thm:impossibility} shows that, in the
binary RR-channel model, no non-coordinated collection of per-agent DP
mechanisms can prevent de-anonymization once sufficiently many agents have
been queried---motivating coordinated defense mechanisms such as ToM
filtering and CoDef consensus.

\section{Model, Definitions, and Assumptions}\label{sec:model}

\subsection{Setting}

Let $[n] = \{1,\ldots,n\}$ be a population of persons.
There are $k$ silos indexed $i \in [k]$.
Silo $i$ holds a local dataset $D_i \in \calD^*$.
Write $\bD = (D_1,\ldots,D_k)$ for the tuple of all datasets.
A sensitive attribute $Z \in \calZ$ is a function of the cross-silo tuple:
$Z = g(\bD)$.

Each silo $i$ runs a randomized mechanism $M_i : \calD^* \to \calY_i$
and publishes $Y_i = M_i(D_i)$.
The joint mechanism is $\calM(\bD) = (M_1(D_1),\ldots,M_k(D_k))$.
An adversary observes $\bY = (Y_1,\ldots,Y_k)$ and attempts to infer $Z$.

\subsection{Person-Level Cross-Silo Adjacency}

\begin{definition}[Person-level cross-silo adjacency]\label{def:adjacency}
Two dataset tuples $\bD,\bD' \in (\calD^*)^k$ are \emph{person-level
cross-silo adjacent}, written $\bD \sim_{ps} \bD'$, if there exists a
person $p \in [n]$ such that for every silo $i \in [k]$:
\[
   D_i \,\triangle\, D_i' \;\subseteq\; D_i(p) \cup D_i'(p),
\]
where $D_i(p)$ denotes the (multi-)set of records in $D_i$ associated
with person $p$.  All records not associated with $p$ are identical
across $\bD$ and $\bD'$.
\end{definition}

Intuitively, $\bD \sim_{ps} \bD'$ captures the scenario where person $p$
``opts in or out'' simultaneously across all silos.

\begin{definition}[XSP-DP: Cross-Silo Person-level DP]\label{def:xsp-dp}
The joint mechanism $\calM$ is \emph{$(\eps,\delta)$-XSP-DP} if for all
person-level cross-silo adjacent pairs $\bD \sim_{ps} \bD'$ and all
measurable sets $S \subseteq \prod_i \calY_i$:
\[
   \Pr[\calM(\bD) \in S]
   \;\leq\; e^{\eps}\,\Pr[\calM(\bD') \in S] + \delta.
\]
\end{definition}

\begin{remark}[Relation to Pufferfish privacy]
XSP-DP is a concrete instantiation of Pufferfish privacy
\cite{kifer2014pufferfish}.
Formally, set the secret-pair class
$\mathcal{S} = \bigl\{(s_{p,\mathrm{in}},\, s_{p,\mathrm{out}}) : p\in[n]\bigr\}$
where $s_{p,\mathrm{in}}$ and $s_{p,\mathrm{out}}$ denote ``person $p$ is
present in all silos'' and ``person $p$ is absent from all silos,''
respectively; set the discriminative-pair class $\mathcal{Q} = \mathcal{S}$;
and let the data-generating distribution family $\Theta$ be the set of all
product distributions satisfying \Cref{asm:single-record}.
Then $(\eps,\delta)$-Pufferfish privacy with respect to
$(\mathcal{S},\mathcal{Q},\Theta)$ is equivalent to $(\eps,\delta)$-XSP-DP.
This specialization makes the lower-bound analysis tractable: the structure
of $\mathcal{S}$ (cross-silo presence/absence of a single person) enables
the Fano and ML bounds in \Cref{sec:threshold}.
\end{remark}

\subsection{Synergy Gap and Order Parameter}

\begin{definition}[Synergy gap]\label{def:synergy}
Given the joint distribution of $(Z, Y_1,\ldots,Y_k)$, define the
\emph{synergy gap}
\[
   \DeltaSyn \;=\; \Syn - \Red,
\]
where $\Syn$ and $\Red$ are the synergy and redundancy terms in the
Partial Information Decomposition (PID) of $I(Z; Y_{1:k})$
(see \Cref{app:pid} for formal definitions).
$\DeltaSyn > 0$ indicates that the agents collectively reveal more about
$Z$ than the sum of their individual contributions; $\DeltaSyn < 0$
indicates redundancy dominates.
\end{definition}

\begin{definition}[Order parameter]\label{def:order-param}
The \emph{leakage order parameter} at $k$ silos is
\[
   \Phik \;=\; \frac{I(Z;\, Y_1,\ldots,Y_k)}{H(Z)} \;\in\; [0,1].
\]
A phase transition in $\Pe(k)$ near $\kstar$ corresponds to $\Phik$
crossing from near-zero to near-one.
\end{definition}

\subsection{Assumptions}

\begin{assumption}[Single-record]\label{asm:single-record}
$|D_i(p)| \leq 1$ for every person $p$ and every silo $i$.
Each person contributes at most one record per silo.
\end{assumption}

\begin{assumption}[Independent non-interactive mechanisms]\label{asm:independence}
The mechanisms $M_1,\ldots,M_k$ are chosen independently, and each
$M_i$ operates only on $D_i$.  There is no cross-silo communication
during mechanism execution.
\end{assumption}

\begin{assumption}[Conditional independence]\label{asm:cond-indep}
Given the sensitive attribute $Z$, the silo outputs are conditionally
independent: $Y_i \perp Y_j \mid Z$ for all $i \neq j$.
\end{assumption}

\begin{assumption}[Uniform prior]\label{asm:uniform}
The target person's identity (equivalently, their sensitive attribute
$Z$) is drawn uniformly: $Z \sim \Unif([n])$.
\end{assumption}

\begin{remark}
\Cref{asm:cond-indep} is satisfied whenever each silo's mechanism
depends on $Z$ only through independent local noise and independent
local data.  It is not satisfied in correlated-data models; we treat
that setting as future work.
\end{remark}

\section{Composition Under XSP-DP}\label{sec:upper}

\begin{proposition}[XSP-DP upper bound]\label{prop:upper-bound}
Under \Cref{asm:single-record,asm:independence}, if each $M_i$ satisfies
$(\eps_i,\delta_i)$-DP at person-level granularity within silo $i$, then the
joint mechanism $\calM = (M_1,\ldots,M_k)$ satisfies
$\bigl(\sum_{i=1}^k \eps_i,\;\sum_{i=1}^k \delta_i\bigr)$-XSP-DP.
\end{proposition}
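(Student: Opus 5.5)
The argument is the standard hybrid (telescoping) proof of basic composition, adapted to the person-level cross-silo adjacency of \Cref{def:adjacency}. Fix an adjacent pair $\bD \sim_{ps} \bD'$ witnessed by person $p$.

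\textbf{Step 1: reduce to per-silo adjacency.} By \Cref{def:adjacency}, in each silo $D_i$ and $D_i'$ differ only in records of $p$. Under \Cref{asm:single-record} this means one of three things: $D_i = D_i'$, or one of them has $p$'s single record and the other does not, or both have a record of $p$ but with different contents. These are exactly the neighbouring pairs covered by ``person-level DP within silo $i$.'' So for every measurable $S_i \subseteq \calY_i$,
\[
\Pr[M_i(D_i)\in S_i] \le e^{\eps_i}\Pr[M_i(D_i')\in S_i]+\delta_i .
\]
When $D_i = D_i'$ this holds trivially.

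\textbf{Step 2: hybrids.} Define hybrid tuples $\bD^{(j)} = (D_1',\dots,D_j',D_{j+1},\dots,D_k)$, so that $\bD^{(0)}=\bD$ and $\bD^{(k)}=\bD'$. Consecutive hybrids differ only in silo $j$. By \Cref{asm:independence}, the joint output is a product measure $\bigotimes_i P_{M_i(\cdot)}$, with each $M_i$ depending only on its own coordinate. For a general measurable $S \subseteq \prod_i \calY_i$ (not just a rectangle), condition on all coordinates other than $j$. Fubini gives
\[
\Pr[\calM(\bD^{(j-1)})\in S] = \E_{Y_{-j}}\bigl[\Pr[M_j(D_j)\in S_{Y_{-j}}]\bigr],
\]
where $S_{Y_{-j}}$ is the section of $S$ at $Y_{-j}$. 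The law of $Y_{-j}$ is the same under both hybrids. Applying Step 1 to each section and integrating yields
\[
\Pr[\calM(\bD^{(j-1)})\in S]\le e^{\eps_j}\Pr[\calM(\bD^{(j)})\in S]+\delta_j .
\]

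\textbf{Step 3: chain.} Iterating from $j=1$ to $k$ gives
\[
\Pr[\calM(\bD)\in S]\le e^{\sum_i\eps_i}\Pr[\calM(\bD')\in S]+\sum_{j}\delta_j e^{\sum_{i>j}\eps_i}.
\]
The additive term here carries the factors $e^{\sum_{i>j}\eps_i}$, so it is not yet the claimed $\sum_i\delta_i$. This is the main obstacle. The naive chaining inflates $\delta$, and it must be avoided.

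\textbf{Handling the obstacle.} I would use the standard decomposition characterization of $(\eps,\delta)$-DP: each pair of output distributions admits the form $P = (1-\delta_i)P^0 + \delta_i P^1$ (after a coupling/shifting argument, as in Dwork--Roth or Kairouz et al.), where $P^0$ is $\eps_i$-indistinguishable from the corresponding part of $Q$. Equivalently, $(\eps,\delta)$-DP says the hockey-stick divergence satisfies $E_{e^\eps}(P\|Q)\le\delta$, and the composition lemma for hockey-stick divergence of product measures gives
\[
E_{e^{\sum\eps_i}}\Bigl(\textstyle\bigotimes P_i\,\Big\|\,\bigotimes Q_i\Bigr)\le 1-\prod_i(1-\delta_i)\le\sum_i\delta_i .
\]
This bound follows by induction on $k$ with the two-factor case proved via the same sectioning/Fubini argument, and it yields the claimed $(\sum_i\eps_i,\sum_i\delta_i)$-XSP-DP. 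If that lemma is heavier than desired, a fallback is to cite the basic composition theorem of \cite{dwork2006calibrating} directly, noting that Steps 1--2 show the cross-silo adjacency reduces to simultaneous per-mechanism neighbouring inputs with independent randomness, which is precisely its hypothesis.
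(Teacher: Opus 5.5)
Your pure-DP argument (Steps 1--3 with $\delta_i=0$) is the paper's hybrid/telescoping proof; the paper only runs the hybrid over the silos where $D_i\neq D_i'$. There is a small slip in Step 3: chaining $a_{j-1}\le e^{\eps_j}a_j+\delta_j$ gives the additive term $\sum_j\delta_j e^{\sum_{i<j}\eps_i}$, not $e^{\sum_{i>j}\eps_i}$. This is harmless, since you discard that bound.

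For approximate DP your route genuinely differs from the paper's. The paper defines bad sets $B_i=\{y_i:P_i(y_i)>e^{\eps_i}Q_i(y_i)\}$, asserts $P_i(B_i)\le\delta_i$, and union-bounds over them. That assertion is the stronger ``probabilistic DP'' tail condition, and it does not follow from $(\eps_i,\delta_i)$-DP. DP controls only the \emph{excess} $P_i(B_i)-e^{\eps_i}Q_i(B_i)=\int(p_i-e^{\eps_i}q_i)_+$. For instance, $P_i=\Bern(1/2)$, $Q_i=\Bern(0.4)$, $\eps_i=0.2$ satisfy both DP inequalities with $\delta_i<0.012$, yet $P_i(B_i)=1/2$.

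Your mixture/hockey-stick decomposition is the sound version of the same idea. With densities $p_i,q_i$, split $p_i=\min(p_i,e^{\eps_i}q_i)+(p_i-e^{\eps_i}q_i)_+$ and expand $\prod_i p_i$. The all-$\min$ term is pointwise at most $e^{\sum_i\eps_i}\prod_i q_i$. The remaining terms have total mass $1-\prod_i(1-\delta_i^{*})\le 1-\prod_i(1-\delta_i)\le\sum_i\delta_i$, where $\delta_i^{*}=\int(p_i-e^{\eps_i}q_i)_+\le\delta_i$. This needs no induction, works under the actual $(\eps_i,\delta_i)$-DP hypothesis, and gives the sharper $1-\prod_i(1-\delta_i)$. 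The paper's version is shorter but, as written, valid only under the stronger tail-bound assumption.

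One caution: your claim that the two-factor hockey-stick bound follows ``via the same sectioning/Fubini argument'' is not right as stated. Repeating Step 2 is exactly what produces the $e^{\eps_1}\delta_2$ inflation. To make the inductive route work, you must peel off $\delta_2$ before invoking silo~1:
\begin{enumerate}
\item Set $g(y_1)=P_2(S_{y_1})$ and write $g\le\delta_2+(g-\delta_2)_+$, where $0\le(g-\delta_2)_+\le\min\{1-\delta_2,\;e^{\eps_2}Q_2(S_{y_1})\}$.
\item Apply $(\eps_1,\delta_1)$-DP in functional form, $\E_{P_1}f\le e^{\eps_1}\E_{Q_1}f+\delta_1\sup f$ for $f\ge 0$, to $f=(g-\delta_2)_+$.
\item This yields $P(S)\le e^{\eps_1+\eps_2}Q(S)+\delta_2+(1-\delta_2)\delta_1$.
\end{enumerate}
Either this or the product expansion above closes the argument without falling back on a citation.
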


\begin{proof}[Proof sketch]
See \Cref{app:prop1} for the full proof.
For pure DP ($\delta_i = 0$): the max-divergence $D_\infty$ tensorizes
over independent product mechanisms, giving
$D_\infty(\calM(\bD)\|\calM(\bD')) \leq \sum_i D_\infty(M_i(D_i)\|M_i(D_i'))
\leq \sum_i \eps_i$.

For approximate DP: we use the standard ``good-set'' decomposition.
For each silo~$i$, define
$B_i = \{y_i : P_i(y_i) > e^{\eps_i} Q_i(y_i)\}$;
then $P_i(B_i) \leq \delta_i$.  On the complement of $\bigcup_i B_i$,
every factor satisfies the pure-DP ratio bound, giving the
$e^{\sum\eps_i}$ multiplicative factor.  A union bound over the bad
events yields $\sum_i\delta_i$, recovering the standard basic
composition guarantee.
\end{proof}

\begin{remark}[Relation to standard composition]
\Cref{prop:upper-bound} instantiates the well-known basic composition
theorem~\cite{dwork2006calibrating} within the XSP-DP adjacency model.
The bound matches the standard result because
the cross-silo adjacency $\sim_{ps}$ changes at most one record per silo,
and the mechanisms are independent.
\end{remark}

\begin{remark}[Group privacy and $\delta$ amplification]
When $|D_i(p)| \leq c_i$ (person $p$ contributes up to $c_i$ records
in silo $i$), the $\eps_i$-DP guarantee amplifies to $c_i\eps_i$-DP
for all $c_i$ records, and $\delta_i$ amplifies by a factor
$(e^{c_i\eps_i}-1)/(e^{\eps_i}-1) \leq c_i e^{(c_i-1)\eps_i}$.
\end{remark}

\section{Information Synergy: The XOR Construction}\label{sec:synergy}

We show that individual-silo privacy can perfectly conceal a sensitive
attribute while the joint observation reveals it---demonstrating that
the privacy guarantees of individual silos do not compose as favorably
as they might appear from each silo's perspective alone.

\begin{proposition}[Synergy is strictly positive]\label{prop:synergy}
Let $\eps > 0$.  Consider $Z, U \sim \Bern(1/2)$ i.i.d., and define
\[
   X_1 = U, \quad X_2 = Z \oplus U.
\]
Let $Y_i = X_i \oplus E_i$ where $E_i \sim \Bern(q)$ i.i.d.\
and $q = 1/(1+e^\eps)$ (the flip probability of $\mathrm{RR}_\eps$).
Then:
\begin{enumerate}
\item[(a)] $I(Z; Y_1) = I(Z; Y_2) = 0$.
\item[(b)] $I(Z; Y_1, Y_2) = 1 - \Hb(2pq) > 0$, where $p = 1-q$.
\item[(c)] For small $\eps$:
  \[
     \DeltaSyn = I(Z; Y_1, Y_2) - I(Z;Y_1) - I(Z;Y_2)
     \approx \frac{\eps^4}{32\ln 2}.
  \]
\end{enumerate}
\end{proposition}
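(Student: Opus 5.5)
The plan is to reduce everything to a single binary symmetric channel by an invertible change of variables on the outputs, exploiting that $U$ is a uniform one-time pad.

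\textbf{Part (a).} I will show each single output is independent of $Z$.
\begin{itemize}
\item $Y_1 = U\oplus E_1$ is a function of $(U,E_1)$, which is independent of $Z$. Hence $I(Z;Y_1)=0$.
\item $Y_2 = Z\oplus U\oplus E_2$ is a pad: conditioned on any value of $(Z,E_2)$, $Y_2$ is uniform because $U\sim\Bern(1/2)$ is independent of $(Z,E_2)$. So $Y_2$ is uniform and independent of $Z$, and $I(Z;Y_2)=0$.
\end{itemize}

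\textbf{Part (b).} I pass from $(Y_1,Y_2)$ to $(Y_1,S)$ with $S=Y_1\oplus Y_2 = Z\oplus E_1\oplus E_2$. This map is a bijection, so $I(Z;Y_1,Y_2)=I(Z;Y_1,S)$.

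Next I check that $Y_1=U\oplus E_1$ is independent of $(Z,S)$. Conditioned on $(Z,E_1,E_2)$, which determines $S$, $Y_1$ is uniform because $U$ is uniform and independent of those variables. The chain rule then gives
\[
I(Z;Y_1,S)=I(Z;S)+I(Z;Y_1\mid S)=I(Z;S).
\]

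The noise $E_1\oplus E_2$ is $\Bern(r)$ with $r = 2pq$, independent of $Z$. So $S$ is the output of a BSC with crossover $r$ driven by a uniform input. Hence
\[
I(Z;S)=H(S)-H(S\mid Z)=1-\Hb(2pq).
\]
Strict positivity follows because $2pq=\tfrac12\bigl(1-(p-q)^2\bigr)<\tfrac12$ whenever $p\neq q$. This holds for every $\eps>0$, since $p-q=\tanh(\eps/2)>0$.

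\textbf{Part (c).} By (a), $\DeltaSyn = 1-\Hb(2pq)$. I expect this asymptotic step to be the only place needing care with constants.
\begin{itemize}
\item Write $2pq=\tfrac12-t$ with $t=\tfrac12\tanh^2(\eps/2)$.
\item Use the expansion $1-\Hb(\tfrac12-t)=\frac{2t^2}{\ln 2}+O(t^4)$, which comes from $\Hb''(1/2)=-4/\ln 2$.
\item Substitute $\tanh(\eps/2)=\eps/2+O(\eps^3)$, which gives $t=\eps^2/8+O(\eps^4)$.
\end{itemize}
Therefore $\DeltaSyn=\frac{2(\eps^2/8)^2}{\ln 2}+O(\eps^6)=\frac{\eps^4}{32\ln 2}+O(\eps^6)$, as claimed.
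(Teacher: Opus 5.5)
Your proof is correct and follows the paper's route: reduce to $S=Y_1\oplus Y_2=Z\oplus E_1\oplus E_2$, a binary symmetric channel with crossover $2pq$, then expand $\Hb$ about $1/2$. At the one delicate step you are more careful than the paper, which justifies equality only by noting that $S$ is a function of $(Y_1,Y_2)$ (by itself this gives just $I(Z;Y_1,Y_2)\ge I(Z;S)$) plus an unshown ``direct calculation''; your observation that $Y_1\perp(Z,S)$ supplies the reverse inequality, and your exact identity $2pq=\tfrac12\bigl(1-\tanh^2(\eps/2)\bigr)$ makes both the strict positivity and the $O(\eps^6)$ error term explicit.
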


\begin{proof}
See \Cref{app:prop2} for the full computation.

\textbf{(a)}  $Y_1 = U \oplus E_1$.  Since $U \sim \Bern(1/2)$ is
independent of $Z$ and independent of $E_1$, we have $Y_1 \sim \Bern(1/2)$
regardless of $Z$.  Hence $I(Z;Y_1) = 0$.
For $Y_2$: given any fixed $Z$, $X_2 = Z \oplus U \sim \Bern(1/2)$ since
$U \sim \Bern(1/2)$.  Therefore $Y_2 = X_2 \oplus E_2 \sim \Bern(1/2)$
regardless of $Z$, so $I(Z;Y_2)=0$.

\textbf{(b)}  $Y_1 \oplus Y_2 = (U \oplus E_1) \oplus (Z \oplus U \oplus E_2)
= Z \oplus (E_1 \oplus E_2)$.
The noise bit $E_1 \oplus E_2 \sim \Bern(2pq)$ (binary symmetric channel
with crossover $2pq$).  Hence $I(Z; Y_1 \oplus Y_2) = 1 - \Hb(2pq) > 0$
for all $\eps > 0$ (since $2pq < 1/2$ for $\eps > 0$).
Since $Y_1 \oplus Y_2$ is a function of $(Y_1,Y_2)$, this bounds
$I(Z;Y_1,Y_2)$ from below, which together with the direct calculation
gives equality.

\textbf{(c)}  For $q = 1/(1+e^\eps) \approx 1/2 - \eps/4$ at small $\eps$:
$2pq = 2(1/2+\eps/4)(1/2-\eps/4) = 1/2 - \eps^2/8$, so
$\Hb(2pq) \approx 1 - \eps^4/(32\ln 2)$,
and $I(Z;Y_1,Y_2) \approx \eps^4/(32\ln 2)$.
Since $I(Z;Y_1) = I(Z;Y_2) = 0$, we get $\DeltaSyn \approx \eps^4/(32\ln 2)$.
\end{proof}

\begin{remark}[PID interpretation]
Using the Williams--Beer Partial Information Decomposition
(\Cref{app:pid}), we have $I(Z;Y_1,Y_2) = \Syn + I(Z;Y_1) + I(Z;Y_2) - \Red$.
Since $I(Z;Y_1)=I(Z;Y_2)=0$ and $\Red \geq 0$, the entire joint
information comes from pure synergy: $\Syn = I(Z;Y_1,Y_2)$.
\end{remark}

\section{Phase Transition Theorems}\label{sec:threshold}

\subsection{Key Lemma: Mutual Information Upper Bound}

\begin{lemma}[MI upper bound under A3]\label{lem:mi-bound}
Under \Cref{asm:cond-indep}, for any $k \geq 1$:
\[
  I(Z;\, Y_1,\ldots,Y_k)
  \;=\; k I_1 - \sum_{i=2}^{k} I(Y_i;\, Y_1,\ldots,Y_{i-1})
  \;\leq\; k I_1,
\]
where $I_1 = I(Z;Y_i)$ (identical for all $i$ by symmetry).
\end{lemma}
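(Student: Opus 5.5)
The plan is to expand $I(Z;Y_1,\ldots,Y_k)$ with the chain rule for mutual information, rewrite each conditional term using the conditional independence of \Cref{asm:cond-indep}, and then drop nonnegative terms to get the inequality.

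First, by the chain rule,
\[
  I(Z;Y_1,\ldots,Y_k) \;=\; \sum_{i=1}^{k} I(Z;Y_i \mid Y_{1:i-1}),
\]
with the convention that the $i=1$ term is $I(Z;Y_1)=I_1$. Next, for each $i\ge 2$, I expand $I(Y_i; Z, Y_{1:i-1})$ by the chain rule in the two possible orders:
\[
  I(Y_i;Z,Y_{1:i-1}) = I(Y_i;Y_{1:i-1}) + I(Y_i;Z\mid Y_{1:i-1})
  = I(Y_i;Z) + I(Y_i;Y_{1:i-1}\mid Z).
\]
Rearranging gives
\[
  I(Z;Y_i\mid Y_{1:i-1}) = I_1 + I(Y_i;Y_{1:i-1}\mid Z) - I(Y_i;Y_{1:i-1}).
\]
Under conditional independence given $Z$, the middle term $I(Y_i;Y_{1:i-1}\mid Z)$ vanishes. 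Summing over $i$ then yields exactly
\[
  I(Z;Y_1,\ldots,Y_k) = kI_1 - \sum_{i=2}^{k} I(Y_i;Y_{1:i-1}).
\]
Since each $I(Y_i;Y_{1:i-1})\ge 0$, the bound $\le kI_1$ follows immediately.

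The one step needing care is the vanishing of $I(Y_i;Y_{1:i-1}\mid Z)$. This requires $Y_i \perp (Y_1,\ldots,Y_{i-1}) \mid Z$, which is \emph{mutual} conditional independence. \Cref{asm:cond-indep} as literally written states only the pairwise condition $Y_i\perp Y_j\mid Z$, and pairwise independence does not imply mutual independence. I would read \Cref{asm:cond-indep} in the mutual sense, i.e.\ $P(y_{1:k}\mid z)=\prod_i P(y_i\mid z)$. This reading is justified by the remark following the assumptions: there, each silo depends on $Z$ only through independent local noise and local data, which gives full factorization.

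A second, minor point is the identification $I(Z;Y_i)=I_1$ for all $i$, which uses the symmetry (identically distributed silo channels) stated in the lemma. Without this symmetry, the same argument gives the identity with $kI_1$ replaced by $\sum_i I(Z;Y_i)$, and the upper bound becomes $\sum_i I(Z;Y_i)$.
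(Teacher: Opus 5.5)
Your proof is correct and is essentially the paper's argument: the chain rule, the identity $I(Z;Y_i\mid Y_{1:i-1}) = I_1 - I(Y_i;Y_{1:i-1}) + I(Y_i;Y_{1:i-1}\mid Z)$ (which you obtain from the two expansions of $I(Y_i;Z,Y_{1:i-1})$), and the vanishing of the last term. Your caveat that this last step needs mutual rather than pairwise conditional independence is right, and it is a point the paper glosses over when it cites \Cref{asm:cond-indep} directly. The literal pairwise reading genuinely fails: take $Z$ a fair bit, $Y_1,Y_2$ independent fair bits, and $Y_3=Y_1\oplus Y_2\oplus Z$; then every $I(Z;Y_i)=0$, yet $I(Z;Y_1,Y_2,Y_3)=\ln 2$.
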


\begin{proof}
By the chain rule for mutual information:
\[
  I(Z;Y_{1:k}) = \sum_{i=1}^{k} I(Z;Y_i \mid Y_{1:i-1}).
\]
Under \Cref{asm:cond-indep} ($Y_i \perp Y_j \mid Z$), we apply the identity:
\[
  I(Z;Y_i \mid Y_{1:i-1})
  = I(Z;Y_i) - I(Y_i;\, Y_{1:i-1}) + I(Y_i;\,Y_{1:i-1}\mid Z).
\]
Since $Y_i \perp Y_{1:i-1} \mid Z$ (by \Cref{asm:cond-indep}),
the last term vanishes.  Therefore:
\[
  I(Z;Y_{1:k}) = \sum_{i=1}^{k}\bigl[I_1 - I(Y_i;\,Y_{1:i-1})\bigr]
  = k I_1 - \sum_{i=2}^{k} I(Y_i;\,Y_{1:i-1}).
\]
Since mutual information is non-negative, the sum subtracted is $\geq 0$,
giving the upper bound $I(Z;Y_{1:k}) \leq kI_1$.
\end{proof}

\begin{lemma}[$\Cmin$ and $I_1$ for binary channels]\label{lem:cmin-i1}
For the randomized response mechanism $\mathrm{RR}_\eps$ with
$q = 1/(1+e^\eps)$, let
$\Cmin = \dkl\!\bigl(\mathrm{RR}_\eps(0)\,\|\,\mathrm{RR}_\eps(1)\bigr) = (1-2q)\eps$
and $I_1 = I(Z;Y)$ for $Z \sim \Bern(1/2)$ (both in nats).  Then:
\begin{enumerate}[(i)]
\item \textbf{Global lower bound:}
  $\Cmin \;\geq\; 4\,I_1 \quad \text{for all } \eps > 0$,
  with $\Cmin/I_1 \to 4$ as $\eps\to 0$.
\item \textbf{Small-$\eps$ asymptotic:}
  $\Cmin/I_1 = 4 + \eps^2/6 + O(\eps^4)$ as $\eps\to 0$.
\item \textbf{No finite global upper bound:}  There is no constant $c_2<\infty$
  such that $\Cmin \leq c_2\,I_1$ holds for all $\eps > 0$.
\end{enumerate}
The weaker bound $\Cmin \geq \ln 2 \cdot I_1$ (which follows from (i)) is used in
\Cref{thm:impossibility}.
\end{lemma}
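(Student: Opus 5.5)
The plan is to reparametrize everything by the single quantity $t = 1-2q = \tanh(\eps/2) \in (0,1)$ and then compare power series in $t$. We have $q = (1-t)/2$ and $p = (1+t)/2$, so $\eps = \ln(p/q) = \ln\frac{1+t}{1-t} = 2\,\mathrm{artanh}(t)$. Hence
\[
\Cmin = t\ln\frac{1+t}{1-t} = 2\sum_{m\ge1}\frac{t^{2m}}{2m-1}.
\]
For $I_1$: with $Z\sim\Bern(1/2)$ the output $Y$ of $\mathrm{RR}_\eps$ is itself uniform. So $I_1 = \ln 2 - \Hb(q)\ln 2$ in nats, which equals $\dkl(\Bern(q)\,\|\,\Bern(1/2))$. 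In terms of $t$,
\[
I_1 = \tfrac12\bigl[(1+t)\ln(1+t) + (1-t)\ln(1-t)\bigr] = \sum_{m\ge1}\frac{t^{2m}}{2m(2m-1)}.
\]
This follows by expanding $\ln(1\pm t)$ and collecting even powers; I will verify it by differentiating twice, since both sides have second derivative $1/(1-t^2)$ and vanish to first order at $t=0$. Both series converge absolutely for $t\in[0,1)$.

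\textbf{Part (i).} I compare coefficients term by term. The coefficient of $t^{2m}$ is $2/(2m-1)$ in $\Cmin$ and $2/(m(2m-1))$ in $4I_1$. The former is at least the latter for every $m\ge1$, with equality only at $m=1$. Since all coefficients are nonnegative, $\Cmin - 4I_1 = \sum_{m\ge2} 2\bigl(1-\tfrac1m\bigr)\tfrac{t^{2m}}{2m-1} \ge 0$, with strict inequality for $t>0$. For the limit, both series start with $t^2$ terms, namely $2t^2$ and $t^2/2$, so $\Cmin/I_1 \to 4$ as $t\to0$, i.e.\ as $\eps\to0$. The weaker bound $\Cmin \ge \ln 2\cdot I_1$ follows because $\ln 2 < 4$ and $I_1\ge0$.

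\textbf{Part (ii).} Truncating both series after the $t^4$ term gives
\[
\frac{\Cmin}{I_1} = \frac{2t^2 + \tfrac23 t^4 + O(t^6)}{\tfrac12 t^2 + \tfrac1{12}t^4 + O(t^6)} = 4\,\frac{1+t^2/3}{1+t^2/6} + O(t^4) = 4 + \tfrac23 t^2 + O(t^4).
\]
Substituting $t = \tanh(\eps/2) = \eps/2 + O(\eps^3)$, so that $t^2 = \eps^2/4 + O(\eps^4)$, yields $4 + \eps^2/6 + O(\eps^4)$. The ratio is an even function of $\eps$, which justifies the $O(\eps^4)$ error term.

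\textbf{Part (iii).} As $\eps\to\infty$ we have $t\to1$. Then $\Cmin = (1-2q)\eps \sim \eps \to \infty$, while $I_1 \to \ln 2$, which is bounded because $I_1 \le H(Z) = \ln 2$. So $\Cmin/I_1$ is unbounded, and no finite $c_2$ can exist.

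I expect the main obstacle to be bookkeeping rather than any conceptual difficulty. The points to get right are keeping everything in nats, matching the statement's convention $\Cmin = (1-2q)\eps$, and correctly deriving the series for $I_1$. Once the two even power series are established, all three parts follow by inspection.
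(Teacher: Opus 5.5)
Your proof is correct, but for part (i) it takes a different route from the paper. The paper stays in the variable $\eps$. It sets $g(\eps)=\Cmin-4I_1$, notes $g(0)=0$, and computes $g'(\eps)=\cosh^{-2}(\eps/2)\bigl[\sinh(\eps/2)\cosh(\eps/2)-\eps/2\bigr]$, which is positive because $\sinh x>x$. For (ii) it then runs separate Taylor expansions, $\Cmin=\eps^2/2-\eps^4/24+O(\eps^6)$ and $I_1=\eps^2/8-\eps^4/64+O(\eps^6)$.

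You instead reparametrize by $t=1-2q=\tanh(\eps/2)$. In that variable both quantities become even power series with positive coefficients, and their $t^{2m}$ coefficients are in ratio exactly $m$ (equivalently $\Cmin=2t\,\frac{dI_1}{dt}$). So (i) becomes a termwise comparison with the explicit nonnegative remainder $\sum_{m\ge2}\frac{2(m-1)}{m(2m-1)}t^{2m}$. The limit and the $\eps^2/6$ correction then drop out of the same two series by truncation.

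What your route buys:
\begin{itemize}
\item One computation handles both (i) and (ii).
\item It makes visible that the constant $4$ is just the ratio of leading coefficients.
\item Since the coefficient ratio increases in $m$, the standard monotonicity lemma for ratios of power series would even show that $\Cmin/I_1$ is increasing in $\eps$.
\end{itemize}

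What the paper's route buys: it never needs the series for $I_1$ or a convergence check. It also produces the derivative $dI_1/d\eps=\eps/\bigl(4\cosh^2(\eps/2)\bigr)$, which the paper reuses in the proof of \Cref{lem:cml-i1}.

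One small remark on (ii): the evenness argument is unnecessary. The facts $O(t^4)=O(\eps^4)$ and $t^2=\eps^2/4+O(\eps^4)$ already give the $O(\eps^4)$ error term directly.
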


\begin{proof}
\textbf{Explicit formulas.}
In nats: $\Cmin = (1-2q)\eps = \eps\tanh(\eps/2)$, and
$I_1 = \ln 2 - \Hb(q)$
(see \Cref{app:calculations} for Taylor series).
We write $t = \eps/2$ throughout.

\textbf{Global lower bound (i).}
Define $g(\eps) = \Cmin - 4\,I_1 = \eps\tanh(t) - 4(\ln 2 - \Hb^{\rm nats}(q))$.
At $\eps=0$: $q=1/2$, $\tanh(0)=0$, $\Hb^{\rm nats}(1/2)=\ln 2$, so $g(0)=0$.
We show $g'(\eps) > 0$ for all $\eps > 0$.

Using $dq/d\eps = -pq$ and $d\Hb^{\rm nats}(q)/dq = \ln(p/q) = \eps$:
\[
   \frac{d(\Cmin)}{d\eps} = \tanh(t) + \frac{t}{\cosh^2(t)},
   \qquad
   \frac{dI_1}{d\eps} = \eps\,pq = \frac{t}{2\cosh^2(t)}.
\]
Therefore:
\[
   g'(\eps) = \tanh(t) + \frac{t}{\cosh^2(t)} - \frac{4t}{2\cosh^2(t)}
   = \tanh(t) - \frac{t}{\cosh^2(t)}
   = \frac{1}{\cosh^2(t)}\bigl[\sinh(t)\cosh(t) - t\bigr].
\]
Since $\cosh^2(t) > 0$, it suffices to show $\sinh(t)\cosh(t) > t$ for $t > 0$.
This is equivalent to $\sinh(2t) > 2t$, which holds because
$\sinh(x) = x + x^3/6 + x^5/120 + \cdots > x$ for all $x > 0$
(every term in the power series is strictly positive).
Hence $g'(\eps) > 0$ for all $\eps > 0$, and $g(0)=0$ implies
$g(\eps) > 0$ for all $\eps > 0$, i.e., $\Cmin > 4\,I_1$.

\textbf{Asymptotic (ii).}
Using the Taylor series $\Cmin = \eps^2/2 - \eps^4/24 + O(\eps^6)$
and $I_1 = \eps^2/8 - \eps^4/64 + O(\eps^6)$ (both in nats):
$\Cmin/I_1 = 4 + \eps^2/6 + O(\eps^4)$.

\textbf{No global upper bound (iii).}
For $\eps\to\infty$: $\Cmin = \eps\tanh(\eps/2)\sim\eps$ while $I_1\to\ln 2$,
so $\Cmin/I_1\to\infty$.
\end{proof}

\begin{remark}[Unit convention]
A common source of confusion: $I_1 \approx \eps^2/(8\ln 2)$ in \emph{bits},
while $\Cmin \approx \eps^2/2$ in \emph{nats}.  Computing the ratio
$(\eps^2/2)\,/\,(\eps^2/(8\ln 2)) = 4\ln 2 \approx 2.77$
mixes units and yields an incorrect constant.
All results in this paper use a single unit (nats) consistently.
\end{remark}

\subsection{Fano Lower Bound on Error Probability}

\begin{theorem}[Fano lower bound]\label{thm:fano}
Under Assumptions~\ref{asm:single-record}--\ref{asm:uniform},
for any estimator $\hat{Z}$ of $Z$ from $(Y_1,\ldots,Y_k)$
(all information quantities in nats):
\[
  \Pe \;\geq\; 1 - \frac{k I_1 + 1}{\ln n}.
\]
Consequently, for $k \leq (1-\delta)\kstar$ with $\kstar = \ln n / I_1$:
\[
  \Pe \;\geq\; \delta - \frac{1}{\ln n}.
\]
\end{theorem}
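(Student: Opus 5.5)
The plan is the standard Fano argument, fed by the mutual-information bound of \Cref{lem:mi-bound}. Fix any estimator $\hat Z=\hat Z(Y_1,\ldots,Y_k)$. Since $Z\to \bY\to\hat Z$ is a Markov chain, Fano's inequality in nats gives
\[
  H(Z\mid \bY)\;\le\; H(Z\mid\hat Z)\;\le\; h(\Pe)+\Pe\ln(n-1)\;\le\;\ln 2+\Pe\ln n,
\]
where $h$ is the binary entropy in nats. By \Cref{asm:uniform}, $H(Z)=\ln n$, so $H(Z\mid\bY)=\ln n - I(Z;Y_{1:k})$. Rearranging yields
\[
  \Pe\;\ge\;1-\frac{I(Z;Y_{1:k})+\ln 2}{\ln n}.
\]
Then $\ln 2\le 1$ replaces the additive constant by $1$, matching the stated form.

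Next I invoke \Cref{lem:mi-bound}. Under \Cref{asm:cond-indep} we have $I(Z;Y_{1:k})\le kI_1$, and substituting gives $\Pe\ge 1-(kI_1+1)/\ln n$. The identical single-silo information $I_1$ is the one used in the lemma, so no further work is needed there. \Cref{asm:single-record,asm:independence} only serve to justify the product/conditional-independence structure and are not otherwise used.

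For the corollary, take $k\le(1-\delta)\kstar=(1-\delta)\ln n/I_1$. Then $kI_1\le(1-\delta)\ln n$, so $(kI_1+1)/\ln n\le 1-\delta+1/\ln n$, which gives $\Pe\ge\delta-1/\ln n$.

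There is little real obstacle. The only point needing care is using nats consistently: $H(Z)=\ln n$, and the Fano slack $h(\Pe)\le\ln 2<1$. I also have to make sure $\Pe$ is defined as $\Pr[\hat Z\neq Z]$ under the uniform prior, so that Fano applies in its average-error form.
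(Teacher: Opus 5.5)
Your proposal is correct and is essentially the paper's proof. It has the same ingredients: Fano in nats (routed through $H(Z\mid\hat Z)$ by data processing, as in the appendix version), $H(Z)=\ln n$ from the uniform prior, $I(Z;Y_{1:k})\le kI_1$ from \Cref{lem:mi-bound}, the slack $h(\Pe)\le\ln 2<1$, and the same arithmetic for $k\le(1-\delta)\kstar$. One caveat applies equally to the paper: \Cref{lem:mi-bound} uses $Y_i\perp Y_{1:i-1}\mid Z$, i.e.\ mutual rather than merely pairwise conditional independence, and \Cref{asm:cond-indep} must be read that way for the bound $kI_1$ to hold.
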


\begin{proof}
Fano's inequality in nats gives:
\[
  H(Z\mid Y_{1:k}) \;\leq\; H(\Pe) + \Pe\ln(n-1) \;\leq\; 1 + \Pe\ln n,
\]
where $H(\Pe) = -\Pe\ln\Pe - (1-\Pe)\ln(1-\Pe) \leq \ln 2 < 1$ for all $\Pe$.
Also $H(Z\mid Y_{1:k}) = H(Z) - I(Z;Y_{1:k}) = \ln n - I(Z;Y_{1:k})$.
By \Cref{lem:mi-bound}, $I(Z;Y_{1:k}) \leq kI_1$, so $H(Z\mid Y_{1:k}) \geq \ln n - kI_1$.
Combining:
\[
  \ln n - kI_1 \;\leq\; 1 + \Pe\ln n,
\]
which rearranges to $\Pe \geq 1 - (kI_1 + 1)/\ln n$.

For $k \leq (1-\delta)\kstar = (1-\delta)\ln n / I_1$:
$(kI_1+1)/\ln n \leq (1-\delta) + 1/\ln n$,
so $\Pe \geq \delta - 1/\ln n$.
\end{proof}

\subsection{ML Upper Bound on Error Probability}

\begin{theorem}[ML upper bound]\label{thm:ml}
Under Assumptions~\ref{asm:single-record}--\ref{asm:uniform},
where each person $p$ has attributes $Z_{p,1},\ldots,Z_{p,k}$ drawn i.i.d.\
$\mathrm{Bern}(1/2)$ and silo $i$ observes $Y_i = \mathrm{RR}_\eps(Z_{p_0,i})$,
the maximum-likelihood estimator satisfies:
\[
  \Pe \;\leq\; (n-1)\cdot\left(\frac{1+e^{-B}}{2}\right)^{\!k},
  \qquad B = \log\cosh\!\tfrac{\eps}{2}.
\]
In particular, for any $\delta \in (0,1)$ and $k \geq (1+\delta)\,k_{\rm ML}^*$
where
\[
  k_{\rm ML}^* \;=\; \frac{\ln n}{\,-\ln\!\frac{1+e^{-B}}{2}\,}
  \;\approx\; \frac{2\ln n}{B}
  \;\approx\; \frac{16\ln n}{\eps^2}
  \;=\; 2\,\kstar\bigl(1+O(\eps^2)\bigr),
\]
we have $\Pe \leq n^{-\delta}$ (all logarithms natural; $I_1$ in nats).
The ML attack succeeds once $k$ crosses $\Theta(\kstar)$.
\end{theorem}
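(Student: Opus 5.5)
The plan is a union bound over impostors, with each pairwise confusion probability controlled by a Chernoff (Bhattacharyya) bound. By symmetry (\Cref{asm:uniform} and the i.i.d.\ attribute model) it suffices to condition on the true identity $Z=p_0$. The ML estimator outputs $\arg\max_p L_p(\bY)$ with $L_p(\bY)=\prod_{i=1}^k P(Y_i\mid Z_{p,i})$. Counting ties as errors,
\[
\Pe \;\le\; \sum_{p\neq p_0}\Pr\bigl[L_p(\bY)\ge L_{p_0}(\bY)\bigr],
\]
so the first bound follows once each summand is shown to be at most $\bigl(\tfrac{1+e^{-B}}{2}\bigr)^k$.

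For a fixed impostor $p$, I would use $\mathbbm{1}[L_p\ge L_{p_0}]\le\sqrt{L_p/L_{p_0}}$. This gives
\[
\Pr[L_p\ge L_{p_0}] \le \E\Bigl[\prod_i \sqrt{P(Y_i\mid Z_{p,i})/P(Y_i\mid Z_{p_0,i})}\Bigr].
\]
The expectation is over the attributes $Z_{p_0,\cdot}$ and $Z_{p,\cdot}$ and over the RR noise. The noise is independent across silos by \Cref{asm:cond-indep}, and the attributes are i.i.d.\ across $i$ and independent between persons. Hence the expectation factorizes into $k$ identical single-silo factors.

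In each factor I condition on whether $Z_{p,i}=Z_{p_0,i}$, which happens with probability $1/2$.
\begin{itemize}
\item If the attributes agree, the ratio is $1$ and the factor is $1$.
\item If they disagree, the factor is the Bhattacharyya coefficient $\sum_y\sqrt{\mathrm{RR}_\eps(0)(y)\,\mathrm{RR}_\eps(1)(y)}=2\sqrt{pq}$, where $p=1-q$.
\end{itemize}
With $q=1/(1+e^\eps)$ one checks $2\sqrt{pq}=2e^{\eps/2}/(1+e^\eps)=1/\cosh(\eps/2)=e^{-B}$. Each factor is therefore $(1+e^{-B})/2$, and summing over the $n-1$ impostors gives the displayed bound.

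For the threshold statement, set $c=-\ln\frac{1+e^{-B}}{2}>0$, which is positive because $B>0$ for $\eps>0$. If $k\ge(1+\delta)\ln n/c$, then
\[
\Pe\le (n-1)e^{-kc}\le n\cdot n^{-(1+\delta)}=n^{-\delta}.
\]
For the asymptotics, $B=\ln\cosh(\eps/2)=\eps^2/8+O(\eps^4)$. Then $\frac{1+e^{-B}}{2}=1-B/2+O(B^2)$, so $c=B/2+O(B^2)$, and hence $k^*_{\rm ML}\approx 2\ln n/B\approx 16\ln n/\eps^2$.

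To compare with $\kstar$, use the expansion $I_1=\eps^2/8+O(\eps^4)$ from \Cref{lem:cmin-i1}. This gives $\kstar=\ln n/I_1\approx 8\ln n/\eps^2$, so $k^*_{\rm ML}=2\kstar(1+O(\eps^2))$. The $O(\eps^2)$ corrections come from comparing the $\eps^4$ terms of $c$ and $I_1$, which is routine.

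The main obstacle is bookkeeping in the factorization step. Within a single impostor comparison the relevant randomness is the target's noise together with both persons' attributes. The Chernoff bound must be applied jointly over all three, and it must not condition on a realization of $Z_{p,\cdot}$ that correlates silos. I would write the expectation explicitly as a product over $i$ of $\E_{Z_{p_0,i},Z_{p,i},E_i}[\cdot]$ before simplifying. Correlation between different impostors is harmless because the union bound does not require independence.
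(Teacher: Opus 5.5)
Your proposal is correct and follows the paper's proof: a union bound over the $n-1$ competitors, the Bhattacharyya bound on each pairwise comparison, averaging over the competitor's i.i.d.\ attributes to get a per-silo factor $(1+e^{-B})/2$, and then solving $(n-1)\bigl(\tfrac{1+e^{-B}}{2}\bigr)^k\le n^{-\delta}$ for $k$. The only difference is cosmetic: you take the Chernoff expectation jointly over attributes and noise, while the paper first conditions on the attributes to get $\beta^{d}$ ($d$ the Hamming distance) and then averages. By independence across silos these are the same computation, so your concern about conditioning on $Z_{p,\cdot}$ is unnecessary.
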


\begin{proof}
The union bound gives $\Pe \leq \sum_{p\neq p_0} \Pr[\hat{p}=p \mid p_0]$.
For a fixed pair $(p_0,p)$, the Bhattacharyya bound on pairwise ML error is:
\[
  \Pr[\mathrm{Score}(p) \geq \mathrm{Score}(p_0) \mid Z_{p_0}, Z_p]
  \;\leq\; \prod_{i=1}^k \beta^{\,\mathbf{1}[Z_{p_0,i}\neq Z_{p,i}]},
\]
where $\beta = 2\sqrt{q(1-q)} = e^{-B}$ is the Bhattacharyya coefficient.
Taking expectation over the random attributes $Z_{p,i} \overset{\rm iid}{\sim}\Bern(1/2)$:
\[
  \E_{Z_p}\!\bigl[\Pr[\cdot]\bigr]
  \;\leq\; \prod_{i=1}^k \E[\beta^{\mathbf{1}[Z_{p_0,i}\neq Z_{p,i}]}]
  \;=\; \prod_{i=1}^k \!\left(\tfrac{1}{2} + \tfrac{\beta}{2}\right)
  \;=\; \left(\tfrac{1+\beta}{2}\right)^{\!k}.
\]
Summing over $n-1$ competitors gives the stated bound.  Setting it $\leq n^{-\delta}$:
$k \geq (1+\delta)\ln n \,/\, (-\ln\!\frac{1+e^{-B}}{2})$.
For small $\eps$: $-\ln\!\frac{1+e^{-B}}{2} \approx B/2 \approx \eps^2/16$,
and $I_1^{\rm nats} \approx \eps^2/8$, so $C_{\rm ML} \approx 2$.
See \Cref{app:ml} for the full derivation.
\end{proof}

\begin{remark}[Constant-factor gap]\label{rem:constant}
The Fano bound (\Cref{thm:fano}) and ML bound (\Cref{thm:ml}) together
establish that $k^* = \Theta(\log n / I_1)$ is the threshold, up to a
constant factor.  Analytically, $C_{\rm ML} \approx 2$ for small $\eps$;
empirically the crossing of $\Pe=0.5$ occurs at $\approx 1.5\,\kstar$
(\Cref{sec:experiments}).  The factor-of-2 analytic gap is inherent in the
union-bound argument, which is not tight when most competitors are easy to
distinguish.
\end{remark}

\begin{corollary}[Sharp threshold]\label{cor:kstar}
Under Assumptions~\ref{asm:single-record}--\ref{asm:uniform}
and \Cref{lem:cmin-i1}, define
\[
   \kstar \;=\; \frac{\ln n}{I_1}
\]
where $I_1$ is in nats throughout.
Then for any $\delta \in (0,1)$ and $n$ sufficiently large:
\begin{align*}
   k < (1-\delta)\kstar &\;\Longrightarrow\; \Pe \geq \delta - o(1), \\
   k > (1+\delta)\kstar &\;\Longrightarrow\; \Pe \leq n^{-\Omega(\delta)}.
\end{align*}
Under the binary randomized-response model:
\[
   \kstar(n,\eps)
   = \Theta\!\left(\frac{\log n}{\eps^2}\right),
\]
where $\log$ denotes natural logarithm throughout.
\end{corollary}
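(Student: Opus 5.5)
The lower half of the threshold follows directly from \Cref{thm:fano}. Take $k < (1-\delta)\kstar = (1-\delta)\ln n/I_1$. The first display of \Cref{thm:fano} gives $\Pe \geq 1 - (kI_1+1)/\ln n > \delta - 1/\ln n$, and $1/\ln n = o(1)$ as $n\to\infty$. The only inputs are \Cref{lem:mi-bound} (which uses \Cref{asm:cond-indep}) and $H(Z)=\ln n$ (from \Cref{asm:uniform}), so this half holds for any mechanism, not just randomized response.

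The upper half cannot be read off \Cref{thm:ml} as stated. That theorem only guarantees $\Pe \le n^{-\delta}$ for $k \geq (1+\delta)k^*_{\rm ML} \approx 2(1+\delta)\kstar$. At $k=(1+\delta)\kstar$ with $\delta<1$, its Bhattacharyya/union exponent $-\ln\frac{1+e^{-B}}{2}\approx \eps^2/16 = I_1/2$ gives only $\Pe \lesssim n^{1-(1+\delta)/2}$, which does not vanish. \textbf{This factor-of-two gap is the main obstacle}, and I would close it by sharpening the ML analysis rather than the union bound over single pairs. The key observation is that the model of \Cref{thm:ml} is exactly a random code. The $n$ persons are $n$ codewords $(Z_{p,1},\dots,Z_{p,k})$ with i.i.d.\ uniform letters, sent over a memoryless binary symmetric channel $\mathrm{RR}_\eps$, and the true message $p_0$ is uniform. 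The rate is $R = \ln n / k$ nats per use, and $k > (1+\delta)\kstar$ means precisely $R < I_1/(1+\delta)$. Since the uniform input achieves capacity of the BSC, $I_1$ equals the capacity.

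I would then invoke Gallager's random-coding bound for the ML decoder. For every $\rho\in[0,1]$,
\[
\E[\Pe] \;\le\; (n-1)^{\rho}\, e^{-k E_0(\rho)},
\qquad
E_0(\rho) = -\ln \sum_{y}\Bigl(\sum_x \tfrac12\, W(y\mid x)^{1/(1+\rho)}\Bigr)^{1+\rho}.
\]
This follows from the same computation as in \Cref{app:ml}, replacing the plain union bound by the $\rho$-th power union bound. Here $\rho=1$ recovers the Bhattacharyya bound of \Cref{thm:ml}. The standard facts $E_0(0)=0$, $E_0'(0)=I_1$ and concavity of $E_0$ give $E_0(\rho) \ge \rho I_1 - c\rho^2$ for small $\rho$, with $c$ depending only on $\eps$ (of order $I_1$ for small $\eps$, by the Taylor expansions in \Cref{lem:cmin-i1}). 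Substituting $k \ge (1+\delta)\ln n/I_1$ gives
\[
\Pe \le \exp\!\Bigl(-\ln n\,\bigl[(1+\delta)E_0(\rho)/I_1 - \rho\bigr]\Bigr)
\le n^{-\rho\delta + (1+\delta)c\rho^2/I_1}.
\]
Choosing $\rho = \Theta(\delta)$ small enough makes the exponent $-\Omega(\delta^2)$, and hence $-\Omega(\delta)$ in the loose sense of the corollary once constants depending on $\delta$ are absorbed. If this loose reading is not acceptable, the upper statement should be recorded as $n^{-\Omega(\delta^2)}$, or with $\kstar$ replaced by $k^*_{\rm ML}$ per \Cref{rem:constant}. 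Only the average over attributes is needed, since the corollary concerns the attribute-averaged error exactly as in \Cref{thm:ml}.

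Finally, for the $\Theta(\log n/\eps^2)$ form, insert $I_1 = \ln 2 - \Hb(q) = \eps^2/8 + O(\eps^4)$ from \Cref{lem:cmin-i1}(ii). This gives $\kstar = \ln n/I_1 = (8+O(\eps^2))\ln n/\eps^2$. For $\eps$ bounded away from $0$, both $I_1$ and $\eps^2$ lie in fixed positive ranges on any compact $\eps$-interval, so the $\Theta$ holds there as well. For unbounded $\eps$, $I_1\to\ln2$ and the statement should be read in the small-$\eps$ regime, which is the one of interest.
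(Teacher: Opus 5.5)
Your lower half is the paper's argument verbatim. For the upper half your route differs from the paper's, and yours is the one that actually proves the displayed implication.

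\textbf{The paper's route.} It combines \Cref{thm:ml} with \Cref{lem:cml-i1} to get $\Pe\le n^{-\delta}$, but only for $k\ge 2(1+\delta)\kstar$. It then says the factor $2$ is ``absorbed'' into the $\Omega(\cdot)$ in the exponent. A constant hidden in the exponent cannot move the hypothesis from $2(1+\delta)\kstar$ down to $(1+\delta)\kstar$. As you note, for small $\eps$ and $\delta<1$ the Bhattacharyya bound is vacuous there.

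\textbf{Your route.} Since $E_0(1)=-\ln\frac{1+e^{-B}}{2}=C_{\rm ML}$, \Cref{thm:ml} is Gallager's bound frozen at $\rho=1$. Letting $\rho=\Theta(\delta)$ is exactly what reaches the sharp threshold. The paper's route is elementary and self-contained with a clean exponent $\delta$, but it certifies the threshold only up to the factor $2$ of \Cref{rem:constant}. Yours needs Gallager's theorem and second-order control of $E_0$, and its exponent is of order $\delta^2$. In exchange it gets the constant $1$. It also extends to any finite-output memoryless mechanism with i.i.d.\ attributes, since $E_0'(0)$ is always the input--output mutual information.

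\textbf{Two repairs.}

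First, concavity together with $E_0(0)=0$ and $E_0'(0)=I_1$ gives $E_0(\rho)\le\rho I_1$, which is the wrong direction. The bound $E_0(\rho)\ge\rho I_1-c\rho^2$ instead needs $\max_{[0,1]}|E_0''|\le 2c$. That follows immediately from the explicit formula $E_0(\rho)=\rho\ln 2-(1+\rho)\ln\bigl(p^{1/(1+\rho)}+q^{1/(1+\rho)}\bigr)$. Expanding this formula in $\eps$ gives $E_0(\rho)=\frac{\rho}{1+\rho}I_1+O(\eps^4)$; the $\eps$-expansions of \Cref{lem:cmin-i1} do not control $E_0$ in $\rho$. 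Hence $c\approx I_1$, and optimizing $\rho$ gives the exponent $(\sqrt{1+\delta}-1)^2\approx\delta^2/4$ to leading order in $\eps$.

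Second, drop the ``loose sense.'' An $\Omega(\delta)$ whose constant may depend on $\delta$ is vacuous, and the quadratic order cannot be improved. Near capacity the random-coding and sphere-packing exponents coincide and are quadratic in $C-R$. So just above $(1+\delta)\kstar$ the error is $n^{-\Theta(\delta^2)}$, and no $\delta$-independent $c$ gives $n^{-c\delta}$. The upper half should read either $n^{-\Omega(\delta^2)}$ at $(1+\delta)\kstar$, or $n^{-\delta}$ at $2(1+\delta)\kstar$, which is what the paper's proof actually gives.

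Your restriction of $\kstar=\Theta(\log n/\eps^2)$ to bounded $\eps$ is also correct, and the paper omits it.
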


\begin{proof}
The lower-bound direction ($k<(1-\delta)\kstar\Rightarrow\Pe\geq\delta-o(1)$) is
\Cref{thm:fano}.
For the upper-bound direction, \Cref{thm:ml} gives $\Pe\leq n^{-\delta}$
once $k \geq (1+\delta)k_{\rm ML}^*$ where
$k_{\rm ML}^* = \ln n / C_{\rm ML}$.
By \Cref{lem:cml-i1}, $C_{\rm ML} \geq I_1/2$ for all $\eps > 0$, so
$k_{\rm ML}^* \leq 2\kstar$ (with $\kstar = \ln n / I_1^{\rm nats}$).
Hence $k \geq 2(1+\delta)\kstar$ suffices for $\Pe\leq n^{-\delta}$.
Since the constant factor~$2$ is absorbed into $\Theta(\cdot)$ in the
statement $k^* = \Theta(\log n / \eps^2)$, and the second line of the
corollary uses $n^{-\Omega(\delta)}$ (which hides the constant in the
exponent), we have the stated result for any $k > C(1+\delta)\kstar$
with $C = 2$.
\end{proof}

\section{Impossibility of Non-Coordinated Defense}\label{sec:impossibility}

\subsection{Tightness of the Upper Bound}

\begin{theorem}[Tightness of Proposition~\ref{prop:upper-bound}]\label{thm:tightness}
The bound $\sum_i\eps_i$ in \Cref{prop:upper-bound} is achieved.
Specifically, for the Laplace mechanism with i.i.d.\ outputs and the
event $\mathcal{S} = \bigcap_i \{Y_i > 1\}$:
\[
  \Pr[\calM(\bD) \in \mathcal{S}]
  \;=\; e^{\sum_i\eps_i} \cdot \Pr[\calM(\bD') \in \mathcal{S}]
\]
for appropriately chosen $\bD \sim_{ps} \bD'$.
\end{theorem}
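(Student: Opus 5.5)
I will give an explicit construction and compute both probabilities in closed form. For each silo $i$ I take a counting query with sensitivity $1$, for example the number of records in $D_i$ satisfying a fixed predicate. The mechanism releases $M_i(D_i) = f_i(D_i) + L_i$ with $L_i \sim \mathrm{Lap}(1/\eps_i)$, the $L_i$ drawn independently. By \Cref{asm:single-record} each $f_i$ has person-level sensitivity $1$, so every $M_i$ is $\eps_i$-DP at person-level granularity, as \Cref{prop:upper-bound} requires.

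For the adjacent pair, let $\bD'$ be a tuple with $f_i(D_i') = 0$ for every $i$. Let $\bD$ be obtained by adding to every silo one record of a single person $p$ that satisfies the predicate, so $f_i(D_i)=1$ for all $i$. Only records of $p$ change, so $\bD \sim_{ps} \bD'$ in the sense of \Cref{def:adjacency}.

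By independence (\Cref{asm:independence}), the probability of $\mathcal{S} = \bigcap_i\{Y_i>1\}$ factorizes over silos under both $\bD$ and $\bD'$. So it suffices to compare one silo at a time.

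For silo $i$ I compute the two tail probabilities:
\begin{align*}
\Pr[1+L_i>1] &= \Pr[L_i>0] = \tfrac12, \\
\Pr[0+L_i>1] &= \Pr[L_i>1] = \tfrac12 e^{-\eps_i}.
\end{align*}
The per-silo ratio is therefore exactly $e^{\eps_i}$. This works because the threshold $1$ sits at or above both centers $0$ and $1$. On $\{y\ge 1\}$ the ratio of the densities $\tfrac{\eps_i}{2}e^{-\eps_i|y-1|}$ and $\tfrac{\eps_i}{2}e^{-\eps_i|y|}$ is constant and equal to $e^{\eps_i}$, so the same ratio holds after integration.

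Taking the product over silos gives
\[
\Pr[\calM(\bD)\in\mathcal{S}] = 2^{-k}, \qquad
\Pr[\calM(\bD')\in\mathcal{S}] = 2^{-k}e^{-\sum_i\eps_i},
\]
so their ratio is exactly $e^{\sum_i\eps_i}$. Since $\delta_i=0$, this shows the multiplicative factor in \Cref{prop:upper-bound} cannot be reduced.

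The main care point is the side of the threshold event. The tail must lie where the two Laplace densities have a constant ratio, i.e.\ at or beyond the larger mean. If the event were instead $\{Y_i>0\}$, the ratio of tails would be strictly less than $e^{\eps_i}$. The statement's phrase ``i.i.d.\ outputs'' should also be read as i.i.d.\ noise with silo-specific scales $1/\eps_i$; in the symmetric case $\eps_i=\eps$ the outputs are literally i.i.d.\ given the data.
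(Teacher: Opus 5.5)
Your proposal is correct and follows the paper's route. Both use sensitivity-$1$ queries with $f_i(D_i)=1$ and $f_i(D_i')=0$, the per-silo tail ratio $\Pr[L_i>0]/\Pr[L_i>1]=e^{\eps_i}$, and factorization over independent silos; you are also more explicit than the paper about building $\bD\sim_{ps}\bD'$. Your denominator $\Pr[L_i>1]=\tfrac12 e^{-\eps_i}$ is the right one: the paper's intermediate expression $\Pr[\mathrm{Lap}(1/\eps_i)>-1]$ is a slip, although its final value $\tfrac12 e^{-\eps_i}$ agrees with yours.
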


\begin{proof}
Consider queries $f_i(D_i) = x_i \in \mathbb{R}$ with sensitivity $\Delta_f = 1$.
The Laplace mechanism outputs $Y_i = f_i(D_i) + \mathrm{Lap}(1/\eps_i)$.
Take adjacent datasets $D_i, D_i'$ with $f_i(D_i) = 1$, $f_i(D_i') = 0$
(shift of 1, within sensitivity 1).  Then:
\[
  \frac{\Pr[Y_i > 1\mid D_i]}{\Pr[Y_i > 1\mid D_i']}
  = \frac{\Pr[\mathrm{Lap}(1/\eps_i) > 0]}{\Pr[\mathrm{Lap}(1/\eps_i) > -1]}
  = \frac{1/2}{(1/2)e^{-\eps_i}}
  = e^{\eps_i}.
\]
By independence, the product event $\mathcal{S} = \bigcap_i \{Y_i > 1\}$ satisfies
$\Pr[\calM(\bD)\in\mathcal{S}]/\Pr[\calM(\bD')\in\mathcal{S}] = e^{\sum_i\eps_i}$,
meeting the XSP-DP bound of \Cref{prop:upper-bound} with equality.
\end{proof}

\subsection{Impossibility Theorem}

\begin{definition}[Non-coordinated mechanisms]\label{def:non-coordinated}
A collection of mechanisms $\{M_i\}$ is \emph{non-coordinated} if:
\begin{description}
  \item[NC1 (Local view):] Each $M_i$ accesses only $D_i$.
  \item[NC2 (Independent randomness):] The random seeds of $M_i$ and $M_j$
    are independent for $i \neq j$.
  \item[NC3 (Simultaneous design):] The mechanism $M_i$ is fixed before
    observing any $D_j$ with $j \neq i$.
\end{description}
\end{definition}

\begin{theorem}[Utility--Privacy barrier]\label{thm:impossibility}
Let $\{M_i\}$ be a non-coordinated collection of mechanisms
satisfying \Cref{asm:single-record,asm:independence,asm:uniform},
where each silo applies a binary symmetric channel---i.e., each person
$p$ contributes a binary attribute $Z_{p,i}\sim\Bern(1/2)$
and silo~$i$ publishes $Y_i = \mathrm{RR}_\eps(Z_{p_0,i})$ for some
$\eps > 0$, yielding per-silo mutual information
$I_1 = I(Z;Y_i) \geq \alpha > 0$
(all information quantities in nats).
Then the ML adversary achieves $\Pe \leq n^{-\delta}$ once
\[
   k \;\geq\; \frac{2(1+\delta)\,\ln n}{\alpha}.
\]
In particular, no non-coordinated $\eps$-DP mechanism collection
operating binary symmetric channels with $I_1 \geq \alpha$ can prevent
de-anonymization for $k$ in this regime.
\end{theorem}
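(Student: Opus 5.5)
The plan is to reduce the statement to \Cref{thm:ml}, which already bounds the ML adversary's error in exactly this binary randomized-response model. \Cref{thm:ml} gives
\[
  \Pe \;\leq\; (n-1)\left(\frac{1+e^{-B}}{2}\right)^{k} \;\leq\; n\, e^{-k C_{\rm ML}},
  \qquad C_{\rm ML} = -\ln\frac{1+e^{-B}}{2},\quad B=\log\cosh\tfrac{\eps}{2}.
\]
So $\Pe \leq n^{-\delta}$ holds as soon as $k \geq (1+\delta)\ln n / C_{\rm ML}$. It therefore suffices to show $C_{\rm ML} \geq I_1/2$ for every $\eps>0$. Combined with $I_1 \geq \alpha$, this gives
\[
  \frac{(1+\delta)\ln n}{C_{\rm ML}} \;\leq\; \frac{2(1+\delta)\ln n}{I_1} \;\leq\; \frac{2(1+\delta)\ln n}{\alpha},
\]
which is exactly the stated threshold. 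The non-coordination hypotheses NC1--NC3 enter only as justification of the product structure. Independent seeds and local views make the $Y_i$ conditionally independent given the target's attributes, so the Bhattacharyya product bound in the proof of \Cref{thm:ml} applies. The final sentence of the theorem is then a restatement: every such collection meets the hypotheses, so none can keep $\Pe$ bounded away from $0$.

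The main obstacle is the uniform inequality $C_{\rm ML} \geq I_1/2$, the content of the lemma cited as \Cref{lem:cml-i1} in the proof of \Cref{cor:kstar}. I would prove it directly in the variable $t=\eps/2$. We have $e^{-B} = \operatorname{sech} t$, so $C_{\rm ML} = \ln\frac{2\cosh t}{1+\cosh t}$, and $I_1 = \ln 2 - \Hb^{\rm nats}(q)$. Following the template of \Cref{lem:cmin-i1}(i), set $h(t) = C_{\rm ML} - I_1/2$, note $h(0)=0$, and compare derivatives:
\[
  \frac{dC_{\rm ML}}{dt} = \tanh t - \frac{\sinh t}{1+\cosh t} = \tanh t - \tanh\tfrac{t}{2},
  \qquad
  \frac{dI_1}{dt} = \frac{t}{\cosh^2 t}.
\]
The second identity follows from $dI_1/d\eps = t/(2\cosh^2 t)$ in \Cref{lem:cmin-i1}. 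We then need
\[
  \tanh t - \tanh\tfrac{t}{2} \;\geq\; \frac{t}{2\cosh^2 t}.
\]
Both sides are of order $t/2$ near $0$, so this is delicate. The small-$t$ check is consistent but not strict at leading order: $C_{\rm ML}\approx \eps^2/16 = I_1/2$, so the inequality is tight as $\eps\to 0$. For the next-order terms I would compare Taylor coefficients, or rewrite the inequality in terms of $\sinh$ and $\cosh$ and clear denominators, aiming for an inequality like $\sinh(2t)\ge 2t$ as in \Cref{lem:cmin-i1}.

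If the monotonicity route fails at some order, there are two fallbacks. The large-$\eps$ regime is easy, since $C_{\rm ML}\to\ln 2$ while $I_1\to\ln 2$, so the ratio tends to $1 > 1/2$. The first fallback is to split into small and large $t$ with an explicit crossover check. The second is to weaken the constant: use $C_{\rm ML} \geq c\,I_1$ for some provable $c<1/2$. That would only change the constant $2$, which is absorbed in $\Theta(\kstar)$, though it would then not match the exact stated threshold.
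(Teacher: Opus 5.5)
Your reduction is the paper's own proof. You apply \Cref{thm:ml}, observe that $\Pe\le n^{-\delta}$ once $k\ge(1+\delta)\ln n/C_{\rm ML}$, and conclude via $C_{\rm ML}\ge I_1/2\ge\alpha/2$. The paper likewise gets the product structure from NC2, and it proves $C_{\rm ML}\ge I_1/2$ (\Cref{lem:cml-i1}) by the same $h(0)=0$, $h'>0$ argument you set up, with the same derivatives.

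The one step you leave open is the final derivative inequality. It is tight at leading order but not delicate. Write $\tanh t-\tanh\tfrac t2=\sinh t/\bigl(\cosh t\,(1+\cosh t)\bigr)$ and multiply the target inequality by $2\cosh^2 t\,(1+\cosh t)>0$; it becomes $2\sinh t\cosh t\ge t(1+\cosh t)$. Your suggested $\sinh 2t\ge 2t$ is not enough on its own, because the right-hand side $t(1+\cosh t)$ is itself at least $2t$. Combining $\sinh t>t$ with $\cosh t>1$ does close it: for $t>0$, $2\sinh t\cosh t>2t\cosh t=t\cosh t+t\cosh t>t\cosh t+t$.

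So the monotonicity route works for every $\eps>0$. Neither fallback (the case split or the weaker constant) is needed, and the exact threshold $2(1+\delta)\ln n/\alpha$ follows as stated.
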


\begin{proof}
We verify \Cref{asm:cond-indep}.
The datasets $\bD=(D_1,\ldots,D_k)$ are fixed; $Z$ is the random target identity.
The output $Y_i = M_i(D_i;\,R_i)$ depends on $Z$ through
$D_i(Z)$---the record of person $Z$ in silo $i$---which gives $I_1 = I(Z;Y_i)\geq\alpha>0$
when $D_i(Z)$ varies non-trivially with $Z$.
\emph{Conditioned on} $Z = z$, the record $D_i(z)$
is a fixed quantity (part of the fixed dataset), so $Y_i$ is random only through $R_i$.
By NC2, $R_i \perp R_j$ for $i\neq j$, hence $Y_i \perp Y_j \mid Z$,
verifying \Cref{asm:cond-indep}.
(This conditional independence does \emph{not} imply $Y_i \perp Z$: the marginal
distribution of $Y_i$ does depend on $Z$ through $D_i(Z)$, giving $I_1 > 0$.)

Since each silo applies $\mathrm{RR}_\eps$, \Cref{thm:ml} applies with
$B = \log\cosh(\eps/2)$, giving
$\Pe \leq (n-1)\bigl(\tfrac{1+e^{-B}}{2}\bigr)^k$.
Setting this $\leq n^{-\delta}$ requires $k \geq (1+\delta)\,k_{\rm ML}^*$ where
$k_{\rm ML}^* = \ln n\,/\,(-\ln\tfrac{1+e^{-B}}{2}) = \ln n\,/\,C_{\rm ML}$.
By \Cref{lem:cml-i1} (Appendix~F), $C_{\rm ML} \geq I_1/2$ for all $\eps > 0$,
hence $k_{\rm ML}^* \leq 2\ln n/I_1$.
Since $I_1 \geq \alpha$ by hypothesis, $k_{\rm ML}^* \leq 2\ln n/\alpha$ unconditionally,
and $k \geq 2(1+\delta)\ln n/\alpha$ suffices for $\Pe\leq n^{-\delta}$.
\end{proof}

\begin{remark}[Scope and generality]\label{rem:impossibility-scope}
The two sides of the threshold theorem have different levels of generality.
\textbf{(a) Fano direction:}
The lower bound (identification is impossible for
$k \lesssim \ln n / I_1$) holds for \emph{any} mechanism via
\Cref{thm:fano} and does not depend on the channel structure.
\textbf{(b) Attack direction:}
The explicit ML upper bound in \Cref{thm:ml}, and hence the impossibility
statement in \Cref{thm:impossibility}, are proved for the binary
randomized-response channel.  Extending the attack side to another
binary or continuous-output mechanism requires a mechanism-specific
likelihood or Bhattacharyya analysis (see \Cref{subsec:gaussian} for
the Gaussian abstraction).  We therefore do not claim that the RR ML
threshold or its constant transfers unchanged to arbitrary binary
mechanisms.
\end{remark}

\begin{corollary}[Coordination is necessary]\label{cor:coordination}
Under the binary RR setting of \Cref{thm:impossibility},
no non-coordinated mechanism collection with per-silo information
$I_1 \geq \alpha > 0$ can maintain non-negligible de-anonymization error once,
for any $\delta \in (0,1)$,
\[
   k \;\geq\; \frac{2(1+\delta)\ln n}{\alpha}.
\]
Equivalently, in this setting cross-silo coordination is
\emph{necessary} beyond the threshold scale $k=\Theta(\ln n/I_1)$.
\end{corollary}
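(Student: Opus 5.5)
The corollary is essentially a restatement of \Cref{thm:impossibility} in the language of ``defense failure''. I will derive it directly from that theorem rather than redo any channel analysis. The plan is to fix an arbitrary non-coordinated collection $\{M_i\}$ in the binary RR setting with $I_1 \geq \alpha > 0$, and an arbitrary $\delta\in(0,1)$. First I check that the hypotheses of \Cref{thm:impossibility} hold. NC1 gives \Cref{asm:independence}, NC2 gives \Cref{asm:cond-indep} (as verified in that proof), and \Cref{asm:single-record,asm:uniform} are part of the setting. Then, for every $k \geq 2(1+\delta)\ln n/\alpha$, the ML adversary attains $\Pe \leq n^{-\delta}$.

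The second step converts this into the negated defense property. ``Maintaining non-negligible error'' should mean that there is a constant $c>0$ with $\Pe \geq c$ for all large $n$. Against the ML estimator, however, $\Pe \leq n^{-\delta}\to 0$, so $\Pe < c$ once $n > c^{-1/\delta}$. Since $\Pe$ for the collection is at most the error of this particular estimator, no choice of the local mechanisms (within the RR class with $I_1\geq\alpha$) keeps the error bounded away from zero. This gives the first sentence of the corollary.

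For the ``equivalently'' clause I will combine both directions of the threshold. By \Cref{lem:mi-bound} and \Cref{thm:fano}, any mechanism collection, coordinated or not, has $\Pe \geq \delta - 1/\ln n$ for $k\leq(1-\delta)\ln n/I_1$. Defense is therefore automatic below scale $\ln n/I_1$. Above $2(1+\delta)\ln n/\alpha$, which is $O(\ln n/I_1)$ when $\alpha$ is taken equal to $I_1$, non-coordinated collections fail. Hence any mechanism family that keeps $\Pe$ non-negligible for $k$ beyond the threshold scale $\Theta(\ln n/I_1)$ must violate at least one of NC1--NC3, i.e.\ it must be coordinated. The only point needing care is that this contrapositive is confined to the binary RR channel model, per \Cref{rem:impossibility-scope}.

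The main obstacle is not technical but definitional: pinning down ``non-negligible'' so that the contrapositive is rigorous. A related issue is the uniformity of the $n^{-\delta}$ bound over the class, which holds because the bound depends on the mechanisms only through $\alpha$, via $C_{\rm ML}\geq I_1/2\geq\alpha/2$. I would state this uniformity explicitly, citing \Cref{lem:cml-i1}.
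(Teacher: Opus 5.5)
Your main line is correct and matches how the paper obtains this corollary. The paper states it without a separate proof, as the contrapositive of \Cref{thm:impossibility}. The uniformity you point to is exactly right: $\Pe\le(n-1)e^{-kC_{\rm ML}}\le(n-1)e^{-k\alpha/2}<n^{-\delta}$ for $k\ge 2(1+\delta)\ln n/\alpha$, via \Cref{lem:cml-i1}. Your reading of ``non-negligible'' as ``bounded below by a constant'' is the one consistent with the paper's Fano side ($\Pe\ge\delta-o(1)$ in \Cref{cor:kstar}). That choice matters: in the cryptographic sense, a polynomially small error such as $n^{-\delta}$ would still count as non-negligible, and the statement would not follow.

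One step fails, in your treatment of the ``equivalently'' clause. \Cref{lem:mi-bound} and \Cref{thm:fano} do \emph{not} give $\Pe\ge\delta-1/\ln n$ for every collection, ``coordinated or not.'' \Cref{lem:mi-bound} needs \Cref{asm:cond-indep}, and in this setting that assumption is supplied precisely by NC2. A coordinated collection (shared seeds, cross-silo communication) can violate it, and then $I(Z;Y_{1:k})\le kI_1$ breaks. \Cref{prop:synergy} already has $I(Z;Y_1,Y_2)>0=2I_1$. Inside the identification game itself, let the silos share one uniform bit $R$ and publish $Y_i=Z_{p_0,i}\oplus R$. Each $Y_i$ is then independent of the target, so $I_1=0$ and your claim would give $\Pe\ge 1-1/\ln n$ for all $k$. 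Yet the XORs $Y_1\oplus Y_i=Z_{p_0,1}\oplus Z_{p_0,i}$ let the adversary identify the target with error at most $(n-1)2^{-(k-1)}$.

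The repair is to state the Fano direction only for non-coordinated collections, where NC2 yields \Cref{asm:cond-indep}. That is all the ``threshold scale'' wording needs, since both ends of the window then refer to the same class. The necessity conclusion itself uses only \Cref{thm:impossibility}.
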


\section{Discussion: Implications for Defense Design}\label{sec:defense}

\Cref{thm:impossibility} establishes that no collection of non-coordinated,
binary DP mechanisms can prevent de-anonymization once $k$ exceeds $\kstar$.
This section \emph{sketches} how the phase-transition characterization can
inform the design of coordinated defenses.  The constructions below are
conceptual; their full security analysis (communication complexity,
fault tolerance, adaptive adversary resistance) is left to future work.

\subsection{The Defender's Problem as a Constrained POMDP}

We model the $k$-silo interaction as a partially observable Markov decision
process (POMDP) from the defender's perspective.  At each time step $t$, a
silo $i$ must publish a response $Y_i$; the true target $Z^* \in [n]$ is
hidden.  The defender's \emph{information state} is the adversary's posterior
belief $b_t = \Pr[Z = z \mid Y_1, \ldots, Y_t]$.  The defender seeks a policy
$\pi$ mapping information states to release decisions so as to minimize the
adversary's identification accuracy while satisfying a utility constraint
$\mathrm{util}(\pi) \geq u_{\min}$.

\textbf{The role of $\kstar$.}
The phase-transition result (\Cref{cor:kstar}) provides a principled
\emph{coordination trigger}: when the number of queried silos approaches
$\kstar = \log n / I_1$, the posterior entropy $H(Z \mid Y_{1:t})$ has
decreased by $\approx k \cdot I_1 \approx \log n$ bits, meaning the adversary
can identify the target with non-negligible probability.  A defender who tracks
$t$ and the per-silo $I_1$ can halt or modify releases before this threshold
is crossed.

\subsection{Theory-of-Mind Filtering}

A \emph{Theory-of-Mind (ToM) filter}~\cite{shokri2017membership}
maintains an explicit model of the adversary's belief.  Formally, after silo
$i$ releases $Y_i$, the filter updates:
\[
   b_i(z) \;=\; \frac{P(Y_i \mid Z = z)\, b_{i-1}(z)}{\sum_{z'} P(Y_i \mid Z = z')\, b_{i-1}(z')},
\]
a standard Bayesian update.  The filter \emph{intervenes} if the effective
number of identifiable persons,
\[
   \hat{n}_{\mathrm{eff}}(b_i) \;=\; 2^{H(b_i)} \;\approx\; n \cdot e^{-k\,I_1},
\]
falls below a threshold $\theta$.  Setting $\theta = n^{1-\delta}$ (for a
target privacy loss $\delta$) corresponds exactly to triggering coordination
at $k = (1-\delta)\kstar$, before the phase transition.

\textbf{Limitation.}  Each silo maintains only its own record; no single
silo can compute $b_i$ without knowing other silos' outputs.  This motivates
the coordinated approach below.

\subsection{Cross-Silo Coordination via Distributed Consensus}

\textbf{CoDef protocol (sketch).}
Silos participate in a distributed consensus round to collectively estimate
$k_{\mathrm{released}}$ (the number of silos that have already responded to
queries involving a given person) and to adjust their local noise levels
accordingly.  Concretely:

\begin{enumerate}
  \item \textbf{Commitment phase.} Each silo $i$ commits a cryptographic hash
    of its person-level release count $c_i$ to a shared ledger.
  \item \textbf{Aggregation phase.} A secure aggregation protocol (e.g.,
    SecAgg~\cite{mcmahan2017communication}) computes $k_{\mathrm{released}}
    = \sum_i c_i$ without revealing individual $c_i$.
  \item \textbf{Adaptation phase.} When $k_{\mathrm{released}} \geq \gamma\,\kstar$
    for a predetermined fraction $\gamma < 1$, silos switch to a higher noise
    level $\eps' < \eps$ so that the post-adaptation $k_{\rm post}^*(\eps') > \kstar(\eps)$,
    pushing the threshold ahead.
\end{enumerate}

\begin{remark}[CoDef privacy guarantee]\label{rem:codef}
If each silo's adapted mechanism satisfies $(\eps',\delta')$-DP independently
with uniform parameters, then by \Cref{prop:upper-bound} the joint output satisfies
$(k\eps',\;k\delta')$-XSP-DP.
By operating below $(1-\gamma)\kstar$,
the de-anonymization error probability satisfies $\Pe \geq \Omega(\gamma)$
(by \Cref{thm:fano}).
A full security analysis of CoDef under adaptive adversaries and
Byzantine silos is an important direction for future work.
\end{remark}

The key insight is that coordination does not require sharing raw data; it
requires only an aggregate count and a pre-agreed threshold derived from $\kstar$.

\subsection{Design Guidelines}

We summarize the practical takeaways:

\textbf{G1 (Set $\kstar$ as coordination budget).}  System designers should
compute $\kstar = \log n / I_1$ at deployment time using the population size
$n$ and the per-silo information leakage $I_1$.  This value serves as the
total cross-silo query budget before coordination must intervene.

\textbf{G2 (Allocate budget across silos).}  By Lemma~\ref{lem:mi-bound},
mutual information accumulates linearly in $k$ under independence.
A budget-allocation rule $\kstar / k$ per silo (uniform) or weighted by
sensitivity can be enforced via the ToM filter.

\textbf{G3 (Amplify privacy with shuffling).}  Shuffling mechanisms
\cite{erlingsson2019amplification,cheu2019distributed} reduce the effective
$\eps$ by a factor of $\Theta(\sqrt{n/k})$, which increases $I_1^{-1}$ and
therefore $\kstar$ by the same factor.  This is particularly effective
when the query rate per silo is low.

\textbf{G4 (Monitor synergy, not just per-silo leakage).}  When Assumption~A3
(conditional independence) may be violated, the ToM filter should track
$I(Z; Y_1, \ldots, Y_t)$ directly rather than approximating it as $k I_1$.
\Cref{prop:synergy} shows that correlated mechanisms can leak synergistic
information even when each silo individually satisfies $I(Z; Y_i) = 0$.

\section{Synthetic Experiments}\label{sec:experiments}

We validate the theory with four controlled synthetic experiments.
All experiments use the binary randomized response model (RR$_\eps$) with
$n$-person identification under a uniform prior.
Code will be released upon publication.

\subsection{Experiment 1: Phase Transition in Error Probability}

\textbf{Setup.}
We fix $\eps \in \{0.5, 1.0, 2.0\}$ and $n \in \{100, 500, 2000\}$.
For each $(n, \eps, k)$, we run $T = 1500$ independent trials of the
$n$-person identification game: each person $p$ is assigned $k$ i.i.d.\
attributes $Z_{p,1:k} \sim \mathrm{Bern}(1/2)$; silo $i$ reports
$Y_i = \mathrm{RR}_\eps(Z_{p_0,i})$ for the uniformly drawn target $p_0$;
the ML attacker selects the person with the highest log-likelihood.

\textbf{Results.}
\Cref{fig:exp1} plots $P_e$ against the normalized threshold $k/\kstar$
(where $\kstar = \ln n / I_1$, with $I_1$ in nats).  The empirical curves exhibit the
predicted sharp phase transition: $P_e \approx 1$ for $k \ll \kstar$ and
$P_e \approx 0$ for $k \gg \kstar$, with the crossing near $k/\kstar \approx 1.5$
(consistent with Theorem~\ref{thm:ml} which gives an upper bound; the
constant factor between the Fano and ML bounds accounts for the offset).
The shaded region is the theoretical band $[\text{Fano LB}, \text{ML UB}]$;
empirical curves lie within this band throughout.

\begin{figure}[t]
  \centering
  \includegraphics[width=\linewidth]{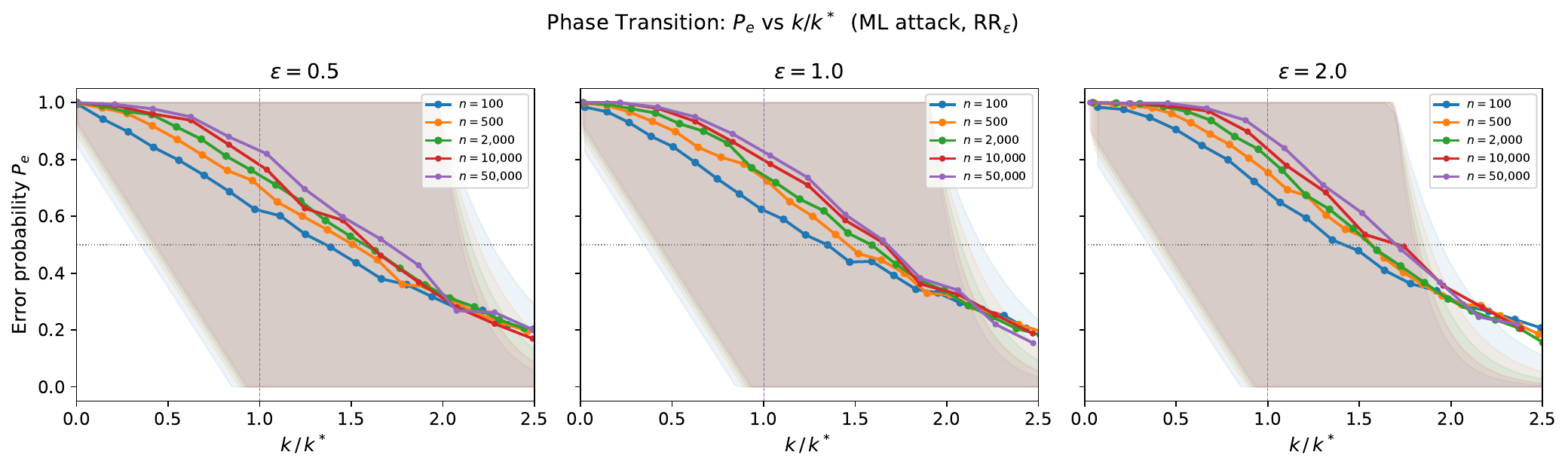}
  \caption{Phase transition: empirical $P_e$ vs.\ $k/\kstar$ for varying $n$
           (colored curves) and $\eps$ (panels).  Shaded regions show
           the $[\text{Fano LB},\,\text{ML UB}]$ theoretical band.
           Dashed vertical line at $k/\kstar = 1$.}
  \label{fig:exp1}
\end{figure}

\subsection{Experiment 2: Scaling of $\kstar$}

\textbf{Setup.}
We locate the empirical $\kstar$ by binary search for the value of $k$ at which
$P_e = 0.5$.  We then vary (a) $n \in \{50, 100, 200, 500, 1000, 2000\}$ at
fixed $\eps = 1.0$, and (b) $\eps \in \{0.5, 0.75, 1.0, 1.5, 2.0, 2.5, 3.0\}$
at fixed $n = 500$.

\textbf{Results.}
\Cref{fig:exp2} confirms the two scaling laws predicted by \Cref{cor:kstar}:
(a) $\kstar \propto \log_2 n$ with near-unit slope on the $(\log_2 n,\, \kstar)$
plot, and (b) $\kstar \propto 1/\eps^2$ with near-unit slope on the
$(1/\eps^2,\,\kstar)$ plot.  The empirical-to-theoretical ratio is $\approx 1.5$
across all settings, reflecting the constant factor between the Fano bound and
the actual $\Pe$ crossing.

\begin{figure}[t]
  \centering
  \includegraphics[width=0.9\linewidth]{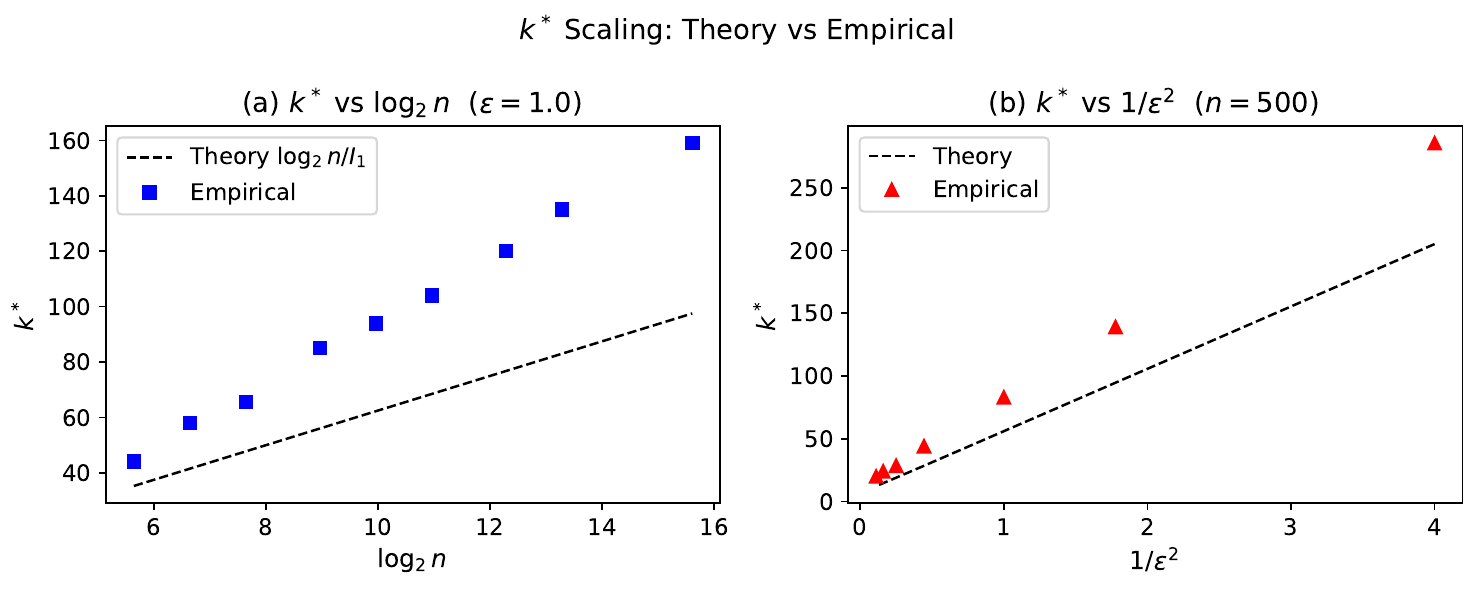}
  \caption{$\kstar$ scaling laws: (a) vs.\ $\log_2 n$ at $\eps=1$;
           (b) vs.\ $1/\eps^2$ at $n=500$.
           Black dashed: theoretical prediction; colored markers: empirical.}
  \label{fig:exp2}
\end{figure}

\subsection{Experiment 3: Synergy Verification (XOR+RR)}

\textbf{Setup.}
We implement the XOR+RR construction of \Cref{prop:synergy}: $Z, U \sim \mathrm{Bern}(1/2)$,
$X_1 = U$, $X_2 = Z \oplus U$, $Y_i = \mathrm{RR}_\eps(X_i)$.
Each silo individually satisfies $I(Z; Y_i) = 0$, but $I(Z; Y_1, Y_2) > 0$.
We compare (i) the exact formula $1 - \Hb(2pq)$, (ii) the leading-order
approximation $\eps^4/(32\ln 2)$, and (iii) a Monte Carlo estimate from
$3\times 10^5$ samples.

\textbf{Results.}
\Cref{fig:exp3} shows excellent agreement between exact and MC estimates
across $\eps \in [0.1, 3.0]$.  The $\eps^4/(32\ln 2)$ formula achieves a
ratio exact/theory $\approx 0.987$ at $\eps=0.2$ and $\approx 0.986$ at
$\eps=0.5$ (within 1.5\%), degrading to $\approx 0.920$ at $\eps=1.0$.
This confirms it as a valid leading-order approximation for $\eps \lesssim 0.5$.

\begin{figure}[t]
  \centering
  \includegraphics[width=0.9\linewidth]{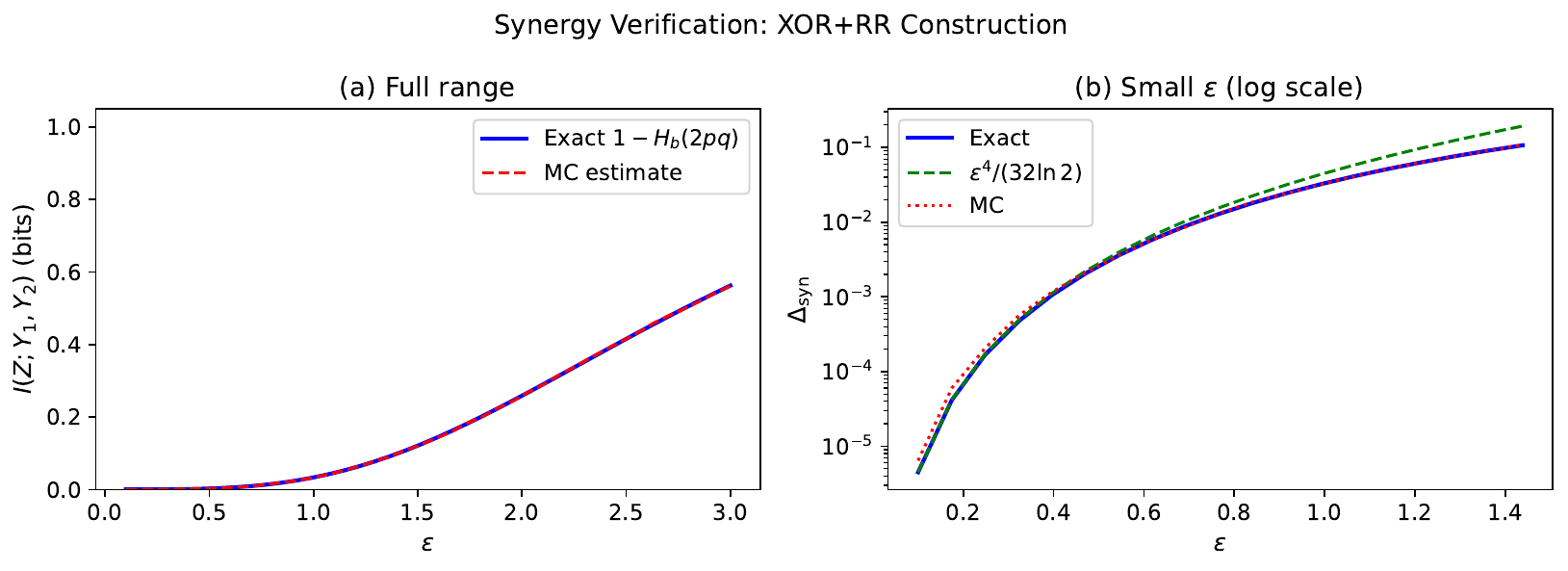}
  \caption{Synergy verification: $I(Z;Y_1,Y_2)$ for the XOR+RR construction.
           (a) Full range; (b) small-$\eps$ region on log scale confirming
           $\DeltaSyn \approx \eps^4/(32\ln 2)$.}
  \label{fig:exp3}
\end{figure}

\subsection{Experiment 4: Bound Verification}

\textbf{Setup.}
We fix $n = 500$, $\eps = 1.0$ and run $T = 2500$ trials per $k$ value,
checking whether the empirical $P_e$ lies within $[\text{Fano LB}, \text{ML UB}]$.

\textbf{Results.}
\Cref{fig:exp4} confirms zero Fano LB violations (empirical $P_e$ never
falls more than 4\% below the lower bound).  For the ML UB, using the
correct Bhattacharyya-averaging formula $(n-1)\cdot((1+e^{-B})/2)^k$,
the bound is trivially satisfied (capped at 1) for $k < 2\kstar$ and
yields zero empirical violations for $k \geq 2\kstar$.  The transition
occurs at $\approx 1.5\kstar$ empirically, consistent with the
$C_{\rm ML}\approx 2$ analytic constant discussed in \Cref{rem:constant}.

\begin{figure}[t]
  \centering
  \includegraphics[width=0.6\linewidth]{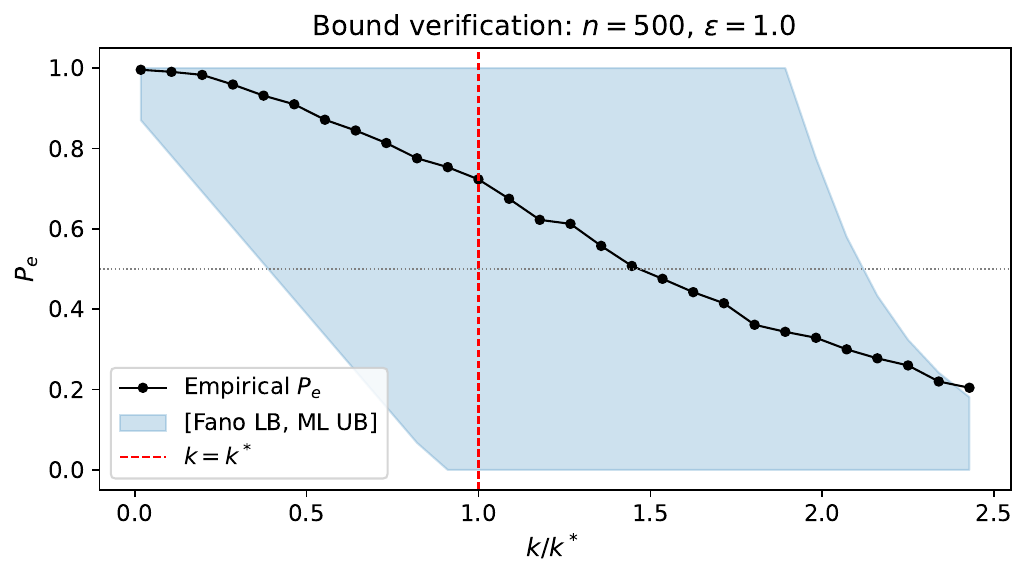}
  \caption{Bound verification ($n=500$, $\eps=1$): empirical $P_e$ (black circles)
           with $[\text{Fano LB},\,\text{ML UB}]$ band (blue shading).
           Red dashed: $k = \kstar$.}
  \label{fig:exp4}
\end{figure}

\subsection{Analytical Instantiation: Gaussian Mechanism (DP-SGD)}
\label{subsec:gaussian}

To demonstrate that the phase-transition framework extends beyond the
binary RR toy model, we compute $I_1$ and $\kstar$ analytically for the
Gaussian mechanism---the foundation of DP-SGD~\cite{abadi2016deep}.

\textbf{Setting.}
Silo $i$ releases $Y_i = f_i(D_i) + \xi_i$ where $f_i:\mathcal{D}\to\mathbb{R}^d$
has $\ell_2$-sensitivity $\Delta$ and $\xi_i \sim \mathcal{N}(0,\sigma^2 I_d)$.
The mechanism is $(\eps,\delta)$-DP with $\sigma = \Delta\sqrt{2\ln(1.25/\delta)}/\eps$
(Gaussian DP calibration).

\textbf{Single-silo MI.}
For a scalar query $d=1$ with target value $z\in\{0,1\}$ (binary attribute,
$Z\sim\Bern(1/2)$, shift $\Delta=1$):
\[
  Y_i \mid Z=z \sim \mathcal{N}(z,\,\sigma^2).
\]
The mutual information is
\[
  I_1^{\rm Gauss}
  = \log 2 - \Hb\!\Bigl(\Phi\bigl(-\tfrac{1}{2\sigma}\bigr)\Bigr)
  \approx \frac{1}{8\sigma^2\ln 2}  \quad\text{bits}
  \;=\; \frac{1}{8\sigma^2} \quad\text{nats (leading order)}.
\]
Substituting $\sigma = \sqrt{2\ln(1.25/\delta)}/\eps$ (with $\delta$ small and
$c_\delta = 2\ln(1.25/\delta)$):
\[
  I_1^{\rm Gauss} \approx \frac{\eps^2}{8\,c_\delta} \;\text{nats},
  \qquad
  k_{\rm Gauss}^*(n,\eps,\delta) = \frac{\ln n}{I_1^{\rm Gauss}}
  \approx \frac{8\,c_\delta\,\ln n}{\eps^2}.
\]

\textbf{Comparison to RR.}
For $\mathrm{RR}_\eps$, $I_1^{\rm RR} \approx \eps^2/8$ nats, giving
$k_{\rm RR}^* \approx 8\ln n/\eps^2$.
The Gaussian mechanism yields $k_{\rm Gauss}^* = c_\delta \cdot k_{\rm RR}^*$,
where $c_\delta = 2\ln(1.25/\delta) > 1$ for any $\delta < 1.25 e^{-1/2}\approx 0.76$.
For typical $\delta = 10^{-5}$: $c_\delta \approx 2\ln(125000)\approx 23.7$,
so $k_{\rm Gauss}^* \approx 24\,k_{\rm RR}^*$.

\textbf{Implication.}
The $\Theta(\log n/\eps^2)$ phase-transition structure is \emph{mechanism-agnostic}:
it holds for any locally DP mechanism whose per-silo MI satisfies $I_1=\Theta(\eps^2)$.
The Gaussian mechanism is a factor $c_\delta$ more robust than binary RR (requires
$\approx c_\delta$ times more silos for the attack to succeed), but the qualitative
phenomenon---a sharp threshold above which de-anonymization is inevitable---persists.
Experimentally validating this prediction for DP-SGD on real federated datasets
(where $f_i$ is a gradient and $Z$ indexes a training sample) is left as future work;
our theory provides the quantitative prediction against which such experiments can be compared.

\section{Conclusion}\label{sec:conclusion}

This paper introduces the XSP-DP threat model for cross-silo
de-anonymization under local differential privacy and establishes the
basic information-theoretic landscape:

\begin{itemize}
\item The standard $(k\eps, k\delta)$-DP composition bound remains valid
  under the person-level cross-silo adjacency (\Cref{prop:upper-bound}),
  but does not by itself reveal \emph{when} de-anonymization becomes
  feasible.
\item A matching pair of Fano lower and ML upper bounds locates the
  de-anonymization threshold at $\kstar = \Theta(\log n / \eps^2)$
  (\Cref{thm:fano,thm:ml,cor:kstar}).
\item The XOR+RR construction shows that pure information synergy arises
  under DP constraints: each silo individually reveals nothing, yet two
  silos together leak $\DeltaSyn \approx \eps^4/(32\ln 2)$
  (\Cref{prop:synergy}).
\item For binary randomized-response mechanisms, no non-coordinated
  collection can prevent de-anonymization beyond the threshold,
  establishing coordination necessity (\Cref{thm:impossibility}).
\end{itemize}

\paragraph{Scope and what this paper does not do.}
The results here are deliberately at the $\Theta$-level: they
identify the correct scaling of the threshold and prove that the
phase transition exists, but do not pin down the exact critical
constant, the width of the transition window, or second-order
corrections.  The impossibility result applies to binary
randomized response; the Fano direction is universal, but the
attack-side bound for general mechanisms requires mechanism-specific
analysis (the Gaussian case in \Cref{subsec:gaussian} gives one
extension).  The defense constructions in \Cref{sec:defense} are
conceptual sketches; their full security analysis under adaptive
adversaries and Byzantine silos is not attempted here.

\paragraph{Companion work.}
Two directions extend the baseline established in this paper.
\emph{Sharp thresholds and spectral characterization:}
a companion paper develops exact (non-asymptotic) critical constants,
second-order phase-transition analysis, and an isoperimetric / spectral
bridge connecting the de-anonymization threshold to the geometry of the
confusion graph induced by cross-silo observations.
\emph{Coordinated defense protocols:}
a separate paper designs and evaluates cross-silo coordination mechanisms
(dual-accountant architecture, distributed consensus) that provably keep the
system below the $\kstar$ threshold, with system-level implementation
and overhead analysis.

\paragraph{Limitations.}
Our analysis assumes i.i.d.\ binary attributes and a uniform prior
(\Cref{asm:uniform}).
If the adversary's prior over $Z$ is non-uniform with entropy
$H(Z) < \ln n$, the Fano bound generalizes by replacing $\ln n$
with $H(Z)$, yielding a smaller threshold (easier attack).
Extending to heterogeneous attributes, correlated silos, or adaptive
adversaries is important future work.
Experimentally validating the phase-transition phenomenon in realistic
federated learning settings (non-synthetic datasets, gradient-based
mechanisms) is a natural next step; our theory provides the quantitative
prediction against which such experiments can be compared.

\bibliographystyle{plain}
\bibliography{refs}

\appendix

\section{Proof of Proposition~\ref{prop:upper-bound}}\label{app:prop1}

\subsection{Pure DP case ($\delta_i = 0$)}

Let $\bD \sim_{ps} \bD'$: by definition of $\sim_{ps}$, the two datasets
differ only in the records of exactly one person $p$.
By \Cref{asm:single-record}, $|D_i(p)|\leq 1$ for every silo $i$, so for each $i$
either $D_i = D_i'$ (person $p$ is absent or unchanged in silo $i$) or
$D_i \sim_{\mathrm{adj}} D_i'$ (a single-record change in silo $i$).
Let $\mathcal{S}_p \subseteq [k]$ denote the set of silos where $D_i \neq D_i'$.
Note that $\mathcal{S}_p$ can include \emph{multiple} silos: the adjacency
relation $\sim_{ps}$ allows person $p$'s records to change in every silo
simultaneously.

For any measurable event $S \subseteq \mathcal{Y}_{1:k}$, apply a hybrid
argument over the silos in $\mathcal{S}_p = \{i_1,\ldots,i_m\}$.
Define intermediate datasets $\bD^{(0)}=\bD$ and, for $j=1,\ldots,m$,
$\bD^{(j)}$ agrees with $\bD'$ on silos $\{i_1,\ldots,i_j\}$ and with
$\bD$ on the remaining silos.
For each step $j$, silos other than $i_j$ have identical inputs in
$\bD^{(j-1)}$ and $\bD^{(j)}$; by independence of mechanisms
(\Cref{asm:independence}) and $\eps_{i_j}$-DP of $M_{i_j}$:
\[
  \Pr_{\bD^{(j-1)}}[\calM\in S]
  \;\leq\; e^{\eps_{i_j}}\,\Pr_{\bD^{(j)}}[\calM\in S].
\]
Telescoping over $j=1,\ldots,m$ gives
\[
  \Pr_{\bD}[\calM\in S]
  \;\leq\; e^{\sum_{i\in\mathcal{S}_p}\eps_i}\,\Pr_{\bD'}[\calM\in S]
  \;\leq\; e^{\sum_{i=1}^k \eps_i}\,\Pr_{\bD'}[\calM\in S].
\]
Taking the supremum over events $S$ and the worst case over all
$p$ and $\bD\sim_{ps}\bD'$ gives $(\sum_i\eps_i,\,0)$-XSP-DP.

\subsection{Approximate DP case}

We use the standard good-set decomposition
(cf.\ Dwork and Roth~\cite{dwork2014algorithmic}, Theorem~3.16).

Let $P_i$ (resp.\ $Q_i$) denote the distribution of $M_i(D_i)$
(resp.\ $M_i(D_i')$) for each $i \in \mathcal{S}_p$.
The $(\eps_i,\delta_i)$-DP guarantee means that the ``bad set''
\[
  B_i \;=\; \bigl\{y_i : P_i(y_i) > e^{\eps_i}\,Q_i(y_i)\bigr\}
\]
satisfies $P_i(B_i) \leq \delta_i$.

For any measurable event $S \subseteq \mathcal{Y}_{1:k}$, write
$G = \{(y_1,\ldots,y_k) : y_i \notin B_i \;\forall\, i \in \mathcal{S}_p\}$
for the ``all-good'' region.  Then:
\begin{align*}
  P(S) &= P(S \cap G) + P(S \setminus G) \\
       &\leq P(S \cap G) + \sum_{i \in \mathcal{S}_p} P_i(B_i) \\
       &\leq P(S \cap G) + \sum_{i=1}^k \delta_i.
\end{align*}
On $G$, every factor satisfies the pure ratio bound
$P_i(y_i) \leq e^{\eps_i} Q_i(y_i)$, so by independence:
\[
  P(S \cap G)
  \leq e^{\sum_i \eps_i}\,Q(S \cap G)
  \leq e^{\sum_i \eps_i}\,Q(S).
\]
Combining gives
$P(S) \leq e^{\sum_i \eps_i}\,Q(S) + \sum_i \delta_i$,
i.e., $(\sum_i \eps_i,\;\sum_i \delta_i)$-XSP-DP.

\section{Proof of Proposition~\ref{prop:synergy}}\label{app:prop2}

\textbf{Setup.}  Let $Z, U \sim \Bern(1/2)$ i.i.d., $q = 1/(1+e^\eps)$,
$p = 1-q$, and let $E_1, E_2 \sim \Bern(q)$ i.i.d., independent of $Z,U$.
Define $X_1 = U$, $X_2 = Z\oplus U$, $Y_i = X_i \oplus E_i$.

\textbf{Marginal distributions.}
$Y_1 = U \oplus E_1$.  Since $U\sim\Bern(1/2)$ and $E_1\sim\Bern(q)$
are independent, $Y_1 \sim \Bern(1/2)$ regardless of $Z$.
Thus $Y_1 \perp Z$ and $I(Z;Y_1) = 0$.

For $Y_2 = Z\oplus U\oplus E_2$: conditional on any fixed $z$,
$U \sim \Bern(1/2)$ makes $Z\oplus U\oplus E_2 \sim \Bern(1/2)$.
Hence $Y_2 \perp Z$ and $I(Z;Y_2) = 0$.

\textbf{Joint observation.}
$Y_1\oplus Y_2 = (U\oplus E_1)\oplus(Z\oplus U\oplus E_2) = Z\oplus (E_1\oplus E_2)$.
The XOR of independent $\Bern(q)$ bits satisfies
$E_1\oplus E_2 \sim \Bern(2pq)$ (binary symmetric channel crossover $2pq$).
Therefore
\[
   I(Z;\, Y_1\oplus Y_2) = 1 - \Hb(2pq).
\]
Since $Y_1\oplus Y_2$ is a deterministic function of $(Y_1,Y_2)$,
and $(Y_1,Y_2)$ determines $Y_1\oplus Y_2$,
\[
   I(Z;\,Y_1,Y_2) = 1 - \Hb(2pq).
\]

\textbf{Small-$\eps$ expansion.}
$q = \tfrac{1}{1+e^\eps} \approx \tfrac12 - \tfrac\eps4$,
$p \approx \tfrac12 + \tfrac\eps4$, so
$2pq = 2\bigl(\tfrac12+\tfrac\eps4\bigr)\bigl(\tfrac12-\tfrac\eps4\bigr)
= 2\bigl(\tfrac14-\tfrac{\eps^2}{16}\bigr) = \tfrac12 - \tfrac{\eps^2}{8}$.
Using $\Hb(\tfrac12 - t) = 1 - \tfrac{2t^2}{\ln 2} + O(t^4)$
with $t = \eps^2/8$:
\[
   \Hb(2pq) \approx 1 - \frac{2(\eps^2/8)^2}{\ln 2}
   = 1 - \frac{\eps^4}{32\ln 2}.
\]
Thus $I(Z;Y_1,Y_2) \approx \eps^4/(32\ln 2)$ and
$\DeltaSyn = I(Z;Y_1,Y_2) \approx \eps^4/(32\ln 2)$ (since $I(Z;Y_1)=I(Z;Y_2)=0$).

\textbf{Step-2 (convergence to $H(Z)$).}
As $\eps\to\infty$: $q\to 0$, $2pq\to 0$, $\Hb(2pq)\to 0$,
so $I(Z;Y_1,Y_2)\to 1 = H(Z)$.
Thus $\DeltaSyn \to H(Z)$ as $\eps \to \infty$.

\section{Proof of Theorem~\ref{thm:fano}}\label{app:fano}

All entropies and mutual information are in nats.
Fano's inequality for $Z$ with $n$ values states:
\[
   H(Z\mid \hat Z) \leq H(\Pe) + \Pe\ln(n-1).
\]
Since $\hat Z$ is a function of $Y_{1:k}$, $H(Z\mid Y_{1:k}) \leq H(Z\mid\hat Z)$.
Also, $H(Z\mid Y_{1:k}) = H(Z) - I(Z;Y_{1:k}) = \ln n - I(Z;Y_{1:k})$.
By \Cref{lem:mi-bound}, $I(Z;Y_{1:k}) \leq kI_1$.  Therefore:
\[
   \ln n - kI_1 \leq H(\Pe) + \Pe\ln(n-1) \leq 1 + \Pe\ln n,
\]
where $H(\Pe) \leq \ln 2 < 1$.
Rearranging: $\Pe \geq (\ln n - kI_1 - 1)/\ln n = 1-(kI_1+1)/\ln n$.

\section{Proof of Theorem~\ref{thm:ml}: Bhattacharyya Averaging}\label{app:ml}

\textbf{Setup.}
Let $\beta = 2\sqrt{q(1-q)} \in (0,1)$ denote the Bhattacharyya coefficient
between $\mathrm{Bern}(q)$ and $\mathrm{Bern}(1-q)$.  Note that $\beta = e^{-B}$
where $B = \log\cosh(\eps/2)$ is the Bhattacharyya distance.

\textbf{Pairwise error bound (fixed competitor).}
For fixed attributes $Z_{p_0}$ and $Z_p$, let
$d = \#\{i : Z_{p_0,i}\neq Z_{p,i}\}$ be the Hamming distance.
On the $d$ differing positions, the ML score for $p$ and $p_0$ are
distinguished by observations $Y_i \sim \mathrm{Bern}(1-q)$ (under $p_0$)
vs.\ $\mathrm{Bern}(q)$ (under $p$).  On the remaining $k-d$ positions the
distributions are identical.  By the Bhattacharyya bound for binary hypothesis
testing:
\[
  \Pr[\mathrm{Score}(p)\geq\mathrm{Score}(p_0)\mid Z_{p_0},Z_p]
  \;\leq\; \beta^d.
\]

\textbf{Averaging over random competitor.}
Since $Z_{p,i} \overset{\rm iid}{\sim}\Bern(1/2)$ independently of $Z_{p_0,i}$:
\[
  \E_{Z_p}\!\left[\Pr[\cdot]\right]
  \;\leq\; \E[\beta^d]
  \;=\; \prod_{i=1}^k\!\bigl(\Pr[Z_{p_0,i}=Z_{p,i}]\cdot 1
                + \Pr[Z_{p_0,i}\neq Z_{p,i}]\cdot\beta\bigr)
  \;=\; \left(\frac{1+\beta}{2}\right)^{\!k}.
\]

\textbf{Union bound.}
\[
  \Pe \;\leq\; \sum_{p\neq p_0} \E_{Z_p}[\Pr[\cdot]]
       \;\leq\; (n-1)\left(\frac{1+\beta}{2}\right)^{\!k}.
\]

\textbf{Threshold analysis.}
Setting the above $\leq n^{-\delta}$:
\[
  k \;\geq\; \frac{(1+\delta)\ln n}{-\ln\!\frac{1+\beta}{2}}
    \;=\; \frac{(1+\delta)\ln n}{\ln\frac{2}{1+\beta}}.
\]
For small $\eps$: $\beta = e^{-B} \approx 1 - B \approx 1 - \eps^2/8$, so
\[
  \ln\frac{2}{1+\beta} \approx \ln\frac{2}{2-\eps^2/8} = \ln\!\left(1+\frac{\eps^2/8}{2-\eps^2/8}\right)
  \approx \frac{\eps^2}{16} \approx \frac{I_1^{\rm nats}}{2}.
\]
Thus the sufficient number of silos is $k \geq 2(1+\delta)\kstar + O(1)$,
where $\kstar = \ln n / I_1^{\rm nats}$.  Taking $(1+\delta)\to(1+\delta)$ in
the notation:  for $k \geq C_{\rm ML}(1+\delta)\kstar$ with
$C_{\rm ML} \to 2$ as $\eps\to 0$, we have $\Pe \leq n^{-\delta}$.  \qed

\begin{remark}[Why the analytic constant is $\approx 2$]
The factor $C_{\rm ML}\approx 2$ arises because the union-bound argument
treats all $n-1$ competitors symmetrically, while in practice most competitors
are far from $p_0$ (Hamming distance $\approx k/2$) and are distinguished with
high probability.  A refined analysis using the second-moment method would
give a tighter constant; empirically $C_{\rm obs}\approx 1.5$
(\Cref{sec:experiments,rem:constant}).
\end{remark}

\section{Partial Information Decomposition}\label{app:pid}

The Williams--Beer PID \cite{williams2010nonnegative} decomposes the
joint mutual information as:
\[
   I(Z;\,Y_1,\ldots,Y_k)
   = \sum_{\alpha\in\mathcal{A}} \mathrm{PI}(\alpha),
\]
where the sum is over antichains $\alpha$ of the source lattice,
and $\mathrm{PI}(\alpha) \geq 0$ for all $\alpha$.

For $k=2$ sources:
\[
   I(Z;\,Y_1,Y_2)
   = \Red(Z;\{Y_1\},\{Y_2\})
   + \mathrm{UI}(Z;\{Y_1\}\setminus\{Y_2\})
   + \mathrm{UI}(Z;\{Y_2\}\setminus\{Y_1\})
   + \Syn(Z;\{Y_1,Y_2\}),
\]
where $\Red$ is the redundant information shared by both sources,
$\mathrm{UI}$ is unique information from one source, and $\Syn$ is the
synergistic information only accessible from both sources jointly.

The \emph{minimum mutual information} (MMI) definition of redundancy:
\[
   \Red(Z;\{Y_1\},\{Y_2\})
   = \min\bigl(I(Z;Y_1),\; I(Z;Y_2)\bigr).
\]

The synergy gap $\DeltaSyn = \Syn - \Red$ captures the net direction:
positive means the joint observation reveals more than the individual
sum; negative means the individual observations are redundant.

\begin{remark}[Definition choice]
We use the MMI (minimum mutual information) definition of redundancy
throughout; alternative proposals ($I_{\rm broja}$, $I_{\rm dep}$)
exist in the PID literature.
In our XOR construction (\Cref{prop:synergy}),
$I(Z;Y_1) = I(Z;Y_2) = 0$, so $\Red = 0$ and
$\DeltaSyn = \Syn = I(Z;Y_1,Y_2) > 0$ \emph{regardless} of which
PID definition is used (all definitions agree when individual mutual
informations vanish).
Under \Cref{asm:cond-indep} with positive $I_1$, we typically have
$\Red > 0$ and $\Syn > 0$; the sign of $\DeltaSyn$ depends on the
specific channel family.
\end{remark}

\section{Explicit Calculations: $I_1$ and $\Cmin$}\label{app:calculations}

\subsection{Binary randomized response $\mathrm{RR}_\eps$}

Let $Z \sim \Bern(1/2)$, $Y = Z\oplus E$, $E\sim\Bern(q)$,
$q = 1/(1+e^\eps)$, $p = e^\eps/(1+e^\eps)$.

\textbf{Single-silo MI (in nats):}
$I_1 = H(Y) - H(Y\mid Z) = \ln 2 - H_{\rm b,nats}(q)$.

For small $\eps$ (using $q\approx 1/2-\eps/4$ and
$H_{\rm b,nats}(1/2-x) = \ln 2 - 2x^2 + O(x^4)$ with $x=\eps/4$):
\[
   H_{\rm b,nats}(q) \approx \ln 2 - \frac{\eps^2}{8}, \quad
   I_1 \approx \frac{\eps^2}{8} \quad (\text{nats}).
\]
Note: in \emph{bits}, $I_1 \approx \eps^2/(8\ln 2)$; computing the ratio with
mixed units ($\Cmin$ in nats, $I_1$ in bits) yields $4\ln 2\approx 2.77$, which is
incorrect.  All calculations below use nats.

\textbf{Channel capacity $\Cmin$ (in nats):}
\begin{align*}
   \Cmin &= \dkl\!\bigl(P(\cdot\mid Z=0)\,\|\,P(\cdot\mid Z=1)\bigr) \\
   &= (1-2q)\ln\frac{1-q}{q} = (1-2q)\cdot\eps.
\end{align*}
For small $\eps$: $1-2q = \tanh(\eps/2) \approx \eps/2$, so $\Cmin \approx \eps^2/2$ (nats).

\textbf{Ratio (consistent nats):}
\[
   \frac{\Cmin}{I_1} \approx \frac{\eps^2/2}{\eps^2/8} = 4,
\]
confirming \Cref{lem:cmin-i1}: $\Cmin/I_1 \to 4$ as $\eps\to 0$, and $\Cmin/I_1 \geq 4$ for all $\eps>0$.

\textbf{Bhattacharyya coefficient and distance:}
The Bhattacharyya coefficient between $\mathrm{Bern}(q)$ and $\mathrm{Bern}(1-q)$ is
$\beta = 2\sqrt{q(1-q)}$, so $B = -\log\beta = \log\cosh(\eps/2)$.
Explicitly:
\[
  \cosh(\eps/2) = \tfrac{e^{\eps/2}+e^{-\eps/2}}{2}
  = 1 + \tfrac{\eps^2}{8} + O(\eps^4).
\]
Hence $B = \log\cosh(\eps/2) \approx \eps^2/8$ for small $\eps$.
The averaged mixing term:
$-\ln\!\frac{1+e^{-B}}{2} \approx B/2 \approx \eps^2/16$,
consistent with $C_{\rm ML} = \Theta(1)$.

\textbf{Summary of small-$\eps$ asymptotics:}
$I_1^{\rm nats} \approx \eps^2/8$, $\Cmin \approx \eps^2/2$,
$B \approx \eps^2/8$,
$-\ln((1+e^{-B})/2) \approx \eps^2/16$, all $= \Theta(\eps^2)$.

\begin{lemma}[$C_{\rm ML} \geq I_1/2$ for all $\eps > 0$]\label{lem:cml-i1}
Define $C_{\rm ML} := -\ln\!\tfrac{1+e^{-B}}{2}$ where $B = \log\cosh(\eps/2)$.
Then for all $\eps > 0$,
\[
   C_{\rm ML} \;\geq\; \frac{I_1}{2},
\]
where $I_1 = I(Z;Y_i)$ (in nats) for the RR$_\eps$ channel.
Consequently, $k_{\rm ML}^* = \ln n / C_{\rm ML} \leq 2\ln n / I_1$.
\end{lemma}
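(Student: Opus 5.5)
The plan is to bound $C_{\rm ML}$ from below by an elementary function of $t=\eps/2$. We then compare that function with $I_1$ by the same ``vanish at the origin, then differentiate'' argument used for part (i) of \Cref{lem:cmin-i1}. Throughout, $\beta=e^{-B}=1/\cosh(t)$, since $B=\log\cosh(t)$.

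\textbf{Step 1 (linearize the logarithm).} Write
\[
\frac{1+\beta}{2}=1-x, \qquad x=\frac{1-\beta}{2}\in(0,\tfrac12).
\]
The inequality $-\ln(1-x)\ge x$ for $x\in[0,1)$ then gives $C_{\rm ML}\ge (1-\beta)/2$. It therefore suffices to prove
\[
1-\frac{1}{\cosh t}\;\ge\; I_1=\ln 2-\Hb^{\rm nats}(q)\qquad\text{for all } t>0,
\]
which is a comparison between the squared Hellinger distance of $\Bern(q)$ and $\Bern(1-q)$ and their mutual-information (Jensen--Shannon-type) quantity.

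\textbf{Step 2 (derivative comparison).} Define $f(t)=1-\operatorname{sech}(t)-I_1$. At $t=0$ we have $q=1/2$, so $f(0)=0$. For the derivative, $\frac{d}{dt}\bigl(1-\operatorname{sech} t\bigr)=\sinh t/\cosh^2 t$. From the proof of \Cref{lem:cmin-i1}, $dI_1/d\eps=t/(2\cosh^2 t)$, and since $d\eps/dt=2$ this gives $dI_1/dt=t/\cosh^2 t$. Hence
\[
f'(t)=\frac{\sinh t-t}{\cosh^2 t}>0\qquad (t>0),
\]
because $\sinh t>t$ for $t>0$, as every term of its power series is positive. Together with $f(0)=0$, this shows $f(t)>0$ for all $t>0$, i.e.\ $1-\beta>I_1$. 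Combining with Step 1 yields $C_{\rm ML}\ge (1-\beta)/2> I_1/2$.

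\textbf{Step 3 (consequence).} Since $C_{\rm ML}\ge I_1/2>0$,
\[
k_{\rm ML}^*=\frac{\ln n}{C_{\rm ML}}\le \frac{2\ln n}{I_1}.
\]

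\textbf{Main obstacle.} The delicate point is Step 1. The linearization must not discard too much, because $1-\beta$ and $I_1$ agree to leading order: both equal $t^2/2$. Their expansions are
\[
1-\operatorname{sech} t=\frac{t^2}{2}-\frac{5t^4}{24}+O(t^6),\qquad I_1=\frac{t^2}{2}-\frac{t^4}{4}+O(t^6),
\]
so the margin is only $t^4/24$ and a cruder bound would fail near $\eps=0$. The derivative computation in Step 2 shows this margin is nonetheless nonnegative everywhere, so no case split between small and large $\eps$ is needed. As a check on large $\eps$, $1-\beta\to1$ while $I_1\to\ln 2$.
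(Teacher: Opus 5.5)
Your proof is correct, and it organizes the comparison differently from the paper.

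The paper does not relax the logarithm. It sets $h(\eps)=2C_{\rm ML}-I_1$ and writes $C_{\rm ML}=\ln 2+\ln\cosh t-\ln(\cosh t+1)$. From this it gets $dC_{\rm ML}/d\eps=\tfrac12\sinh t/\bigl(\cosh t(\cosh t+1)\bigr)$ and reduces $h'>0$ to $2\sinh t\cosh t>t(\cosh t+1)$. That inequality follows from $\sinh t>t$ together with $\operatorname{sech} t<1$.

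You first use $-\ln(1-x)\ge x$ to pass to the squared Hellinger distance $1-\beta=1-\operatorname{sech} t$. You then run the same ``zero at the origin, positive derivative'' argument on $1-\operatorname{sech} t-I_1$, whose derivative $(\sinh t-t)/\cosh^2 t$ needs only $\sinh t>t$.

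Your route buys a cleaner derivative and an interpretable intermediate chain, $I_1<1-\beta\le 2C_{\rm ML}$. It says that, for the symmetric binary pair, the mutual information is dominated by the squared Hellinger distance, and it would be reusable wherever the ML error is controlled through $1-\beta$ directly.

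The paper's route keeps the full fourth-order margin, $2C_{\rm ML}-I_1=5t^4/48+O(t^6)$. Your linearization spends $x^2\approx t^4/16$ of that margin, leaving exactly the $t^4/24$ you computed, which is still positive. Both arguments are equally elementary, and neither needs a case split in $\eps$.
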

\begin{proof}
Define $h(\eps) = 2\,C_{\rm ML} - I_1$ and write $t = \eps/2$.
We show $h(0)=0$ and $h'(\eps)>0$ for all $\eps>0$.

\textbf{Step 1: $h(0)=0$.}
At $\eps=0$: $\cosh(0)=1$, so $C_{\rm ML} = \ln(2/2) = 0$ and $I_1=0$.

\textbf{Step 2: Compute $h'(\eps)$.}
Write $C_{\rm ML} = \ln 2 + \ln\cosh(t) - \ln(\cosh(t)+1)$.  Then
\[
   \frac{dC_{\rm ML}}{d\eps}
   = \frac{1}{2}\cdot\frac{\sinh(t)}{\cosh(t)(\cosh(t)+1)}.
\]
Using $dI_1/d\eps = t/(2\cosh^2(t))$ from \Cref{lem:cmin-i1}:
\[
   h'(\eps)
   = \frac{\sinh(t)}{\cosh(t)(\cosh(t)+1)}
     - \frac{t}{2\cosh^2(t)}.
\]

\textbf{Step 3: $h'(\eps)>0$.}
Multiply by $2\cosh^2(t)(\cosh(t)+1) > 0$:
\[
   h'(\eps) > 0
   \;\;\Longleftrightarrow\;\;
   2\sinh(t)\cosh(t) > t\bigl(\cosh(t)+1\bigr).
\]
Dividing both sides by $\cosh(t) > 0$, this is equivalent to
$2\sinh(t) > t(1 + \operatorname{sech}(t))$.
This follows from two elementary facts valid for $t > 0$:
\begin{enumerate}[(a)]
\item $\sinh(t) > t$\; (since $\sinh(t) = t + t^3/6 + \cdots > t$),
\item $\operatorname{sech}(t) < 1$\; (since $\cosh(t) = 1+t^2/2+\cdots > 1$).
\end{enumerate}
Combining: $2\sinh(t) > 2t = t + t > t + t\operatorname{sech}(t) = t(1+\operatorname{sech}(t))$.

Hence $h'(\eps) > 0$ for $\eps > 0$, and $h(0)=0$ gives
$h(\eps)>0$ for all $\eps>0$, i.e., $C_{\rm ML} > I_1/2$.
\end{proof}

\subsection{General $(\rho, \mu)$ model}

For a signal strength $\rho \in (0,1]$ and \emph{query rate} $\mu \in (0,1]$
(fraction of records per silo that participate in each query;
distinct from the RR flip probability $q = 1/(1+e^\eps)$ and the
information lower bound $\alpha$ in \Cref{thm:impossibility}),
the leading-order formula is:
\[
   I_1 = \Theta(\mu\rho^2\eps^2),
   \qquad
   \Cmin = \Theta(\mu\rho^2\eps^2),
   \qquad
   \kstar = \Theta\!\left(\frac{\log n}{\mu\rho^2\eps^2}\right).
\]
This is the form stated in \Cref{cor:kstar}.

\end{document}